\newcommand{\R}{\mathbb{R}}
\newcommand{\E}{\mathbb{E}}
\DeclareMathOperator{\Var}{Var}
\newcommand{\supplementary}[1]{#1}
\newcommand{\short}[1]{}
\newtheorem{theorem}{Theorem}
\newtheorem{corollary}{Corollary}
\newtheorem{lemma}{Lemma}
\newenvironment{customthm}[1]
  {\innercustomthm}
  {\endinnercustomthm}
\date{}
\title{CountSketches, Feature Hashing and the Median of Three\footnote{The authors are part of BARC, Basic Algorithms Research Copenhagen, supported by the VILLUM Foundation grant 16582.}}
\author{
	Kasper Green Larsen\footnote{Kasper Green Larsen is supported by a Villum Young Investigator Grant, a DFF Sapere Aude Research Leader Grant and an AUFF Starting Grant.}\\
	\texttt{larsen@cs.au.dk}\\
	Department of Computer Science,\\
	Aarhus University, Denmark
	\and
    Rasmus Pagh\\
	\texttt{pagh@di.ku.dk}\\
	BARC, Department of Computer Science,\\
	University of Copenhagen, Denmark
	\and
	Jakub Tětek\footnote{Jakub Tětek has been supported by the Bakala Foundation Scholarship.}\\
	\texttt{j.tetek@gmail.com}\\
	BARC, Department of Computer Science,\\
	University of Copenhagen, Denmark
}
\begin{document}

\maketitle

\begin{abstract}
In this paper, we revisit the classic \emph{CountSketch} method, which is a sparse, random projection that transforms a (high-dimensional) Euclidean vector $v$ to a vector of dimension $(2t-1) s$, where $t, s > 0$ are integer parameters. It is known that even for $t=1$, a CountSketch allows estimating coordinates of $v$ with variance bounded by $\|v\|_2^2/s$. For $t > 1$, the estimator takes the median of $2t-1$ independent estimates, and the probability that the estimate is off by more than $2 \|v\|_2/\sqrt{s}$ is exponentially small in $t$. This suggests choosing $t$ to be logarithmic in a desired inverse failure probability. However, implementations of CountSketch often use a small, constant~$t$. Previous work only predicts a constant factor improvement in this setting. 

Our main contribution is a new analysis of Count\-Sketch, showing an improvement in variance to $O(\min\{\|v\|_1^2/s^2,\|v\|_2^2/s\})$ when $t > 1$.
That is, the variance decreases proportionally to $s^{-2}$, asymptotically for large enough $s$. We also study the variance in the setting where an inner product is to be estimated from two CountSketches.
This finding suggests that the \emph{Feature Hashing} method, which is essentially identical to CountSketch but does not make use of the median estimator, can be made more reliable at a small cost in settings where using a median estimator is possible.

We confirm our theoretical findings in experiments and thereby help justify why a small constant number of estimates often suffice in practice. Our improved variance bounds are based on new general theorems about the variance and higher moments of the median of i.i.d.~random variables that may be of independent interest.
\end{abstract} 
\section{Introduction}
CountSketch~\cite{charikar2004finding} is a classic low-memory algorithm for processing a data stream in one pass. It supports estimating the number of occurrences of different data items in the stream, and can also be used for fast inner product estimation, or as a building block for finding heavy hitters (see e.g.~\cite{woodruff2016new}). 
Since its introduction, CountSketch has proved to be a strong primitive for approximate computation on high-dimensional vectors.
Applications in machine learning include feature selection~\cite{aghazadeh2018mission}, neural network compression~\cite{chen2015compressing}, random feature mappings~\cite{pham2013fast}, compressed gradient optimizers~\cite{spring2019compressing}, and multitask learning~\cite{weinberger2009feature} --- see section~\ref{sec:related} for more details.

\subsection{Sketch description}
CountSketch works in the turnstile streaming model, where one is to maintain a sketch of a vector $v \in \R^d$ under updates to the entries. Concretely, the vector $v$ is given in a streaming fashion as a sequence of updates $(i_1,\Delta_1),(i_2,\Delta_2),\dots,$ where an update $(i,\Delta)$ has the effect of setting $v_i \gets v_i + \Delta$ for some $\Delta \in \R$.

The sketch can be stored as a matrix $A$ with $2t-1$ rows and~$s$ columns --- alternatively viewed as a vector of dimension $(2t-1)s$. 
Updates to the sketch are defined by hash functions $h_1,\dots,h_{2t-1}$ and $g_1,\dots,g_{2t-1}$. To initialize an empty CountSketch, we pick a 2-wise independent hash function $h_i : [d] \to [s]$ mapping entries in $v$ to columns of $A$, and a 2-wise independent hash function $g_i : [d] \to \{-1,1\}$ mapping entries in $v$ to a random sign, each for row $i \in [2t - 1]$.\footnote{A $k$-wise independent hash function has independent and uniform random hash values when restricted to any set of up to $k$ keys.}. To process the update $(j,\Delta)$ the update algorithm sets $A_{i,h_i(j)} \gets A_{i,h_i(j)} + g_i(j) \Delta$ for $i=1,\dots,2t-1$. Thus entry $k$ of the $i$th row of $A$ contains the sum of all coordinates $v_j$ such that $h_i(j)=k$, with each such coordinate $v_j$ multiplied by a random sign $g_i(j)$.

\subsection{Frequency estimation}
A frequency estimation query (a.k.a.~point query) asks to return an estimate of an entry $v_j$. CountSketch supports such queries by returning the median of $\{g_i(j) A_{i,h_i(j)} \}_{i=1}^{2t-1}$. The classic analysis of CountSketch shows that for each row $i$ of $A$ and entry $v_j$, the estimate $\hat{v}_j^i = g_i(j)A_{i,h_i(j)}$ has expectation $v_j$ and variance at most $\|v\|_2^2/s$.
Using Chebyshev's inequality, this implies that $\Pr[|\hat{v}_j^i -v_j| \geq 2 \|v\|_2/\sqrt{s}] \leq 1/4$. This is often boosted to a high probability bound by taking the median $\hat{v}_j$ of the $2t-1$ row estimates $\hat{v}_j^1,\dots,\hat{v}_j^{2t-1}$ and using a Chernoff bound to conclude that  $\Pr[|\hat{v}_j -v_j| \geq 2 \|v\|_2/\sqrt{s}] \leq \exp(-\Omega(t))$. A similar, but less common, analysis based on Markov's inequality can also be used to give a bound based on the $\ell_1$ norm of $v$. 
More concretely, it can be shown that $\E[|\hat{v}_j^i - v_j|] \leq \|v\|_1/s$. This can again be combined with the Chernoff bound to conclude that $\Pr[|\hat{v}_j -v_j| \geq 4 \|v\|_1/s] \leq \exp(-\Omega(t))$. This latter bound has a better dependency on the number of columns (and hence space usage) but potentially a worse dependency on $v$ as $\|v\|_1 \geq \|v\|_2$ for all $v$ ($\|v\|_1$ and $\|v\|_2$ are close when $v$ consists of a few large non-zero entries).

Both of the above bounds suggest using a value of $t$ that is logarithmic in the desired failure probability. However, practitioners rarely use more than a small constant number of rows, such as $3$ or $5$ ($t=2,3$) rows. Based on the classic analysis of CountSketch, this only changes the failure probability by a constant factor and has no asymptotic benefits. Nonetheless, we show in experiments (in Section~\ref{sec:exp}) that already $3$ rows seems to have a profound impact on the variance of the estimates. The result of one experiment is seen in Figure~\ref{fig:varianceSynthetic}. Here the ratio between the variance with $1$ and $3$ rows is more than $200$ when using $s=512$ columns.

\begin{figure*}[t]
  \includegraphics[width=\textwidth]{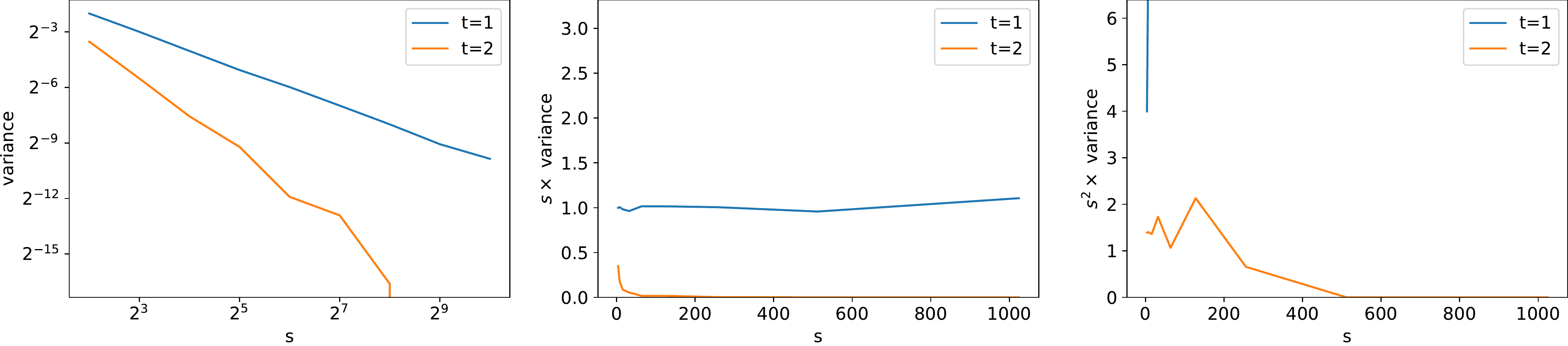}
  \caption{Variance plot of frequency estimation (point queries) for CountSketch with $t=1$ and $t=2$, run on a one-hot vector $v$ with a single nonzero coordinate $v_i=1$. The first figure shows that the variances behave linearly on a log-log plot, suggesting that the variances decrease polynomially with the number of columns $s$. The second plot shows variance multiplied by~$s$. CountSketch with $t=1$ becomes near-constant, suggesting that its variance grows as $1/s$. The third plot shows variance multiplied by $s^2$ and suggests that the variance for $t=2$ grows roughly as $1/s^2$.}
  \label{fig:varianceSynthetic}
\end{figure*}
We explain these observations through new theoretical insights about CountSketch. Concretely, we prove:
\begin{theorem}
  \label{thm:variance}
CountSketch with $t=2$ (3 rows) satisfies $\E[(\hat{v}_j - v_j)^2] \leq \min\{3 \|v\|^2_1/s^2, \|v\|_2^2/s\}$.
\end{theorem}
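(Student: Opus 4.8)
The plan is to reduce the statement to a claim purely about the median of three i.i.d.\ random variables and then handle the two bounds in the minimum separately. Fix the queried coordinate $j$ and write $Y_i := \hat v_j^i - v_j$ for the error of the $i$-th row estimate, $i\in\{1,2,3\}$. Since the three rows use independently chosen pairs of hash functions, $Y_1,Y_2,Y_3$ are i.i.d., and the classical analysis recalled above gives $\E[Y_i]=0$, $\E[Y_i^2]\le\|v\|_2^2/s$, and $\E[|Y_i|]\le\|v\|_1/s$. Moreover $\hat v_j - v_j$ equals $M:=\mathrm{median}(Y_1,Y_2,Y_3)$ (the median commutes with the translation by $v_j$), so it suffices to prove $\E[M^2]\le\min\{3\|v\|_1^2/s^2,\ \|v\|_2^2/s\}$.

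For the $3\|v\|_1^2/s^2$ bound --- the new part, responsible for the $s^{-2}$ decay --- the key realization is that neither Markov's inequality applied to $M$ nor a Chernoff bound on the number of deviating rows is the right tool; one instead wants a pointwise inequality that lets the independence of the rows be used \emph{multiplicatively}. I would establish the chain
\[
  M^2 \;\le\; \mathrm{median}\bigl(|Y_1|,|Y_2|,|Y_3|\bigr)^2 \;\le\; \sum_{1\le i<k\le 3} |Y_i|\,|Y_k| ,
\]
whose two steps are elementary: first, $|\mathrm{median}(a,b,c)|\le\mathrm{median}(|a|,|b|,|c|)$ by a two-case check on the sign of the middle-by-value element (if it is nonnegative it is dominated by the largest element, if negative by the smallest); second, the middle of three nonnegative reals is at most the largest, so the product of the two largest is at most the sum of all three pairwise products. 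Taking expectations and using independence, $\E[M^2]\le\sum_{i<k}\E[|Y_i|]\,\E[|Y_k|]=3(\E[|Y_1|])^2\le 3\|v\|_1^2/s^2$.

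For the $\|v\|_2^2/s$ bound it suffices to show that taking the median of three i.i.d.\ estimates never increases the second moment, i.e.\ $\E[M^2]\le\E[Y_1^2]$, which I would obtain as the $2t-1=3$ case of a general moment inequality for medians of i.i.d.\ random variables. Here one uses that the single-row error $Y_i$ is symmetric about $0$ (negating the sign-hash value at coordinate $j$ negates $Y_i$). Conditioning on the multiset $\{|Y_1|,|Y_2|,|Y_3|\}$, the three signs are then independent and uniform, and a short enumeration of the eight sign patterns yields $\E\bigl[M^2 \,\big|\, |Y_1|,|Y_2|,|Y_3|\bigr]=\tfrac12\bigl(W_{(1)}^2+W_{(2)}^2\bigr)$, where $W_{(1)}\le W_{(2)}\le W_{(3)}$ are the sorted values $|Y_i|$. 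Since the two smallest of three nonnegative numbers sum to at most two thirds of their total, $\E[M^2]\le\tfrac13\,\E[Y_1^2+Y_2^2+Y_3^2]=\E[Y_1^2]\le\|v\|_2^2/s$. Taking the minimum of the two bounds completes the proof.

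The main obstacle is conceptual rather than computational: spotting that the improvement to $s^{-2}$ has to come from an argument that exploits independence of the rows as a product --- the inequality $M^2\le\sum_{i<k}|Y_i||Y_k|$ is exactly what turns the first-moment bound $\E[|Y_i|]\le\|v\|_1/s$ into a bound on $\E[M^2]$ that scales as its square, whereas any union/Chernoff-type argument only saves constant factors. A secondary point requiring care is the general median-moment fact underlying the $\|v\|_2^2/s$ side: it is intuitively clear that the median of i.i.d.\ estimates is no worse than a single estimate, but making this precise rests on the symmetry of the single-row error together with the sign-pattern computation above.
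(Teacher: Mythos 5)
Your argument for the new $3\|v\|_1^2/s^2$ bound is correct and takes a genuinely different, more elementary route than the paper. The paper derives this bound from a general theorem on moments of the median of $2t-1$ i.i.d.\ variables: it first shows the tail bound $\Pr[|Y-\E[X_1]|^{tq}\ge x]\le\binom{2t-1}{t}\Pr[|X_1-\E[X_1]|^{tq}\ge x]^t$ (the median can only deviate when at least $t$ of the variables do) and then converts this into a moment bound via the integral inequality $\int_0^\infty f(\sqrt[t]{x})^t\,dx\le\bigl(\int_0^\infty f(x)\,dx\bigr)^t$ for non-increasing $f$. Your pointwise chain $M^2\le\mathrm{median}(|Y_1|,|Y_2|,|Y_3|)^2\le\sum_{i<k}|Y_i||Y_k|$ is in effect the $t=2$, $q=1$ case of that machinery carried out with no integration at all, and it recovers the same constant $3=\binom{3}{2}$; both steps of the chain check out. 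What the paper's route buys is generality: the same tail-plus-integral argument yields the higher-moment and general-$t$ statements (Theorems 2--4) for free, whereas your inequality is tailored to the second moment of the median of three.

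There is, however, a genuine gap in your proof of the $\|v\|_2^2/s$ half of the minimum. You assert that the single-row error $Y_i=g_i(j)\sum_{k\ne j}1_{h_i(k)=h_i(j)}g_i(k)v_k$ is symmetric about $0$ because negating the sign $g_i(j)$ negates $Y_i$. Negating $g_i(j)$ is a measure-preserving operation only if $g_i(j)$ is independent of the remaining signs $\{g_i(k)\}_{k\ne j}$, and $2$-wise independence does not guarantee this: a distribution on $(g(j),g(k),g(\ell))$ can have uniform marginals and pairwise independence while $\E[g(j)g(k)g(\ell)]\ne 0$, in which case $Y_i$ (conditioned on the relevant collisions) is not symmetric. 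Consequently your conditional computation $\E[M^2\mid |Y_1|,|Y_2|,|Y_3|]=\tfrac12(W_{(1)}^2+W_{(2)}^2)$ is not justified under the stated hypotheses; without symmetry the pointwise bound only gives $M^2\le\tfrac12(Y_1^2+Y_2^2+Y_3^2)$ and hence $\E[M^2]\le\tfrac32\|v\|_2^2/s$, losing a factor $3/2$. To close this you would need either a stronger independence assumption on $g$ (enough to make the row error symmetric) or a different argument that the median of three i.i.d.\ mean-zero estimates does not increase the second moment; the paper itself does not spell out this half of the minimum, deriving only the $\ell_1$ bound from its median-moment theorem.
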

The new contribution in Theorem~\ref{thm:variance} is the bound in terms of $\|v\|_1$. Quite interestingly, the bound in terms of $\|v\|_1$ is not true if using just a single row. To see this, consider any vector $v$ with a single non-zero entry $v_i$. The estimate for any other entry $v_j$ then equals $0$ with probability $1-1/s$ ($h(i) \neq h(j)$) and it equals $v_ig(i)g(j)$ with probability $1/s$. One therefore has $\E[(\hat{v}_j - v_j)^2] = v_i^2/s = \|v\|^2_1/s$.
This shows that using just three rows instead of a single row effectively reduces the variance of CountSketch by a factor $s$ in terms of $\|v\|_1$. We find this new insight into one of the most fundamental sketching techniques surprising.
We also show that taking the median of three asymptotically reduces the fourth moment of the error in terms of $\|v\|_2$:
\begin{theorem}
  \label{thm:4th}
 CountSketch with $t=2$ (3 rows) satisfies $\E[(\hat{v}_j - v_j)^4] \leq 3 \|v\|_2^4/s^2$.
\end{theorem}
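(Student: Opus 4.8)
The plan is to reduce Theorem~\ref{thm:4th} to a short inequality about order statistics of i.i.d.\ nonnegative random variables. Put $X_i := \hat v_j^i - v_j$ for $i = 1,2,3$, where $\hat v_j^i = g_i(j) A_{i,h_i(j)}$ is the $i$th single-row estimate. Since distinct rows use independently sampled hash functions, $X_1, X_2, X_3$ are i.i.d., and the classic single-row analysis recalled in the introduction gives $\E[X_i] = 0$ and $\E[X_i^2] = \sum_{k \neq j} v_k^2 / s \le \|v\|_2^2/s$ --- note this uses only $2$-wise independence of $h_i$ and $g_i$. Because the median is translation-equivariant, $\hat v_j - v_j = \mathrm{med}(X_1, X_2, X_3)$, so it suffices to show $\E[\mathrm{med}(X_1,X_2,X_3)^4] \le 3(\E[X_1^2])^2$.

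First I would record the elementary fact that $|\mathrm{med}(a,b,c)| \le \mathrm{med}(|a|,|b|,|c|)$ for all reals $a,b,c$: writing $m = \mathrm{med}(a,b,c)$ and assuming without loss of generality $m \ge 0$, the two values among $a,b,c$ that are at least $m$ both have absolute value $\ge m$, so the median of $|a|,|b|,|c|$ is $\ge m$. Composing with the monotone map $t \mapsto t^2$ on $[0,\infty)$ yields
\[
  \mathrm{med}(X_1,X_2,X_3)^4 \le \mathrm{med}(|X_1|,|X_2|,|X_3|)^4 = \mathrm{med}(X_1^2,X_2^2,X_3^2)^2 .
\]
Now set $Z_i := X_i^2$, an i.i.d.\ family of nonnegative variables with $\E[Z_i] = \E[X_i^2] \le \|v\|_2^2/s$, and let $Z_{(1)} \le Z_{(2)} \le Z_{(3)}$ be their order statistics, so that the right-hand side above is $Z_{(2)}^2$. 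Since $Z_{(2)} \le Z_{(3)}$ and $Z_{(2)} Z_{(3)}$ is one of the three nonnegative pairwise products $Z_1 Z_2, Z_1 Z_3, Z_2 Z_3$, we get $Z_{(2)}^2 \le Z_{(2)} Z_{(3)} \le Z_1 Z_2 + Z_1 Z_3 + Z_2 Z_3$. Taking expectations and using pairwise independence, $\E[Z_{(2)}^2] \le 3\,(\E[Z_1])^2 \le 3\|v\|_2^4/s^2$, and hence $\E[(\hat v_j - v_j)^4] \le \E[Z_{(2)}^2] \le 3\|v\|_2^4/s^2$.

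I do not expect a genuine obstacle in carrying this out; the main conceptual point --- and the thing that makes the $s^{-2}$ rate possible --- is that one must \emph{not} try to bound the per-row fourth moment $\E[X_i^4]$ directly (that quantity can be of order $\|v\|_2^4/s$, a factor $s$ too large, and would in any case require $3$- or $4$-wise independence to evaluate). Instead one exploits the concentration of the median via the order-statistic inequality together with independence across the three rows, which costs nothing in the independence assumptions. The same two ingredients --- the inequality $|\mathrm{med}(a,b,c)| \le \mathrm{med}(|a|,|b|,|c|)$ and the pairwise-product bound on $Z_{(2)}^2$ --- also drive the $\|v\|_1$ part of Theorem~\ref{thm:variance}, so it is natural to first isolate them as a general statement about even moments of medians of i.i.d.\ variables, in the form $\E[\mathrm{med}(Y_1,\dots,Y_{2t-1})^{2k}] \le \binom{2t-1}{k}(\E[Y_1^2])^{k}$ for $k \le t$, and then specialize to $t = 2$, $k = 2$.
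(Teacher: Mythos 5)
Your proof is correct, but it reaches the bound by a genuinely different route than the paper. The paper derives Theorem~\ref{thm:4th} as an instance ($t=2$, $q=2$) of a general moment inequality for medians (Theorem~\ref{thm:moments}), whose proof is distributional: for the median to deviate, at least $t$ of the $2t-1$ variables must deviate, giving $\Pr[|Y-\E[X_1]|^{tq}\geq x]\leq\binom{2t-1}{t}\Pr[|X_1-\E[X_1]|^{tq}\geq x]^t$, after which the layer-cake identity and a separate integral inequality (Lemma~\ref{lem:integral_ineq}) convert the tail bound into a moment bound. You instead argue pointwise: $|\mathrm{med}(X_1,X_2,X_3)|\leq\mathrm{med}(|X_1|,|X_2|,|X_3|)$, so the fourth power of the median is $Z_{(2)}^2\leq Z_{(2)}Z_{(3)}\leq Z_1Z_2+Z_1Z_3+Z_2Z_3$ with $Z_i=X_i^2$, and independence across rows factors the expectation. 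Both arguments encode the same combinatorial fact (the ``at least $t$ out of $2t-1$'' principle appears in your proof as the bound of $Z_{(t)}\cdots Z_{(2t-1)}$ by the sum over all $\binom{2t-1}{t}$ products), and both yield the same constant $\binom{3}{2}=3$. What your version buys is elementarity --- no integration and no analogue of Lemma~\ref{lem:integral_ineq} is needed --- and indeed substituting $Z_i=|X_i-\E[X_1]|^q$ into your order-statistic argument reproves the paper's Theorem~\ref{thm:moments} in full generality; for $t=2$ it also only uses pairwise independence of the rows. What the paper's formulation buys is that the tail-probability step is stated once for all $q$ and $t$ and slots directly into the $\ell_1$, $\ell_2$ and inner-product bounds. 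Your closing remark that one must not bound the per-row fourth moment directly is exactly the right diagnosis of why the $s^{-2}$ rate appears only for $t\geq 2$.
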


Moreover, we show\short{ in the supplementary material} that this bound is asymptotically optimal.
If we consider the same example as above with a vector $v$ with just a single non-zero entry $v_i$, we again see that when estimating any $v_j$ with $j\neq i$ we have $\E[(\hat{v}_j - v_j)^4] = v_i^4/s = \|v\|_2^4/s$. Thus using $t=2$ ($3$ rows) rather than $t=1$ ($1$ row) reduces the fourth moment by a factor $s$ in terms of $\|v\|_2$. We find it quite remarkable that a constant factor increase in the number of rows increases the utilization of the number of columns by a linear factor both in terms of the variance as a function of $\|v\|_1$ and the fourth moment as a function of $\|v\|_2$. Combined with our experiments, this strongly suggest that one should always use at least $3$ rows in practice.
We extend our results to any $t$ and show:
\begin{theorem}
  \label{thm:2t}
  CountSketch with median of $2t-1$ rows satisfies $\E[|\hat{v}_j - v_j|^t] \leq 2^{2t-1} \|v\|^t_1/s^t$ and $\E[(\hat{v}_j - v_j)^{2t}] \leq 2^{2t-1} \|v\|^{2t}_2/s^{t}$.
\end{theorem}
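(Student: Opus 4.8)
\textbf{Proof plan for Theorem~\ref{thm:2t}.}
The plan is to reduce the statement to a self-contained inequality about moments of the median of i.i.d.\ random variables, and then prove that inequality by a short deterministic argument followed by a union bound. Fix the queried coordinate $j$ and, for each row $i\in\{1,\dots,2t-1\}$, set $X_i:=\hat v_j^i-v_j$. Because the rows use independent pairs of hash functions, $X_1,\dots,X_{2t-1}$ are i.i.d., and the classical per-row analysis recalled in the introduction already gives $\E[|X_i|]\le\|v\|_1/s$ and $\E[X_i^2]\le\|v\|_2^2/s$ (both holding under mere $2$-wise independence of $h_i,g_i$). Since $\hat v_j$ is the median of $\hat v_j^1,\dots,\hat v_j^{2t-1}$, the error $\hat v_j-v_j$ equals the median $M$ of $X_1,\dots,X_{2t-1}$. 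So it suffices to prove that for i.i.d.\ real $X_1,\dots,X_{2t-1}$ with median $M$ and every integer $r\ge 1$,
\[ \E[|M|^{rt}]\ \le\ \binom{2t-1}{t}\,\E[|X_1|^{r}]^{t}. \]

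To prove this, I would sort the samples by magnitude, $|X_{(1)}|\le\cdots\le|X_{(2t-1)}|$, and establish two elementary facts. First, $|M|\le|X_{(t)}|$: if $M>0$ then at least $t$ of the samples are $\ge M$ and hence have magnitude $\ge M$, so the $t$-th smallest magnitude satisfies $|X_{(t)}|\ge M$, and the case $M\le 0$ is symmetric. Second, $|X_{(t)}|=\max_{|S|=t}\min_{i\in S}|X_i|$ over $t$-element subsets $S\subseteq\{1,\dots,2t-1\}$, the maximum being attained by the set of indices of the $t$ largest magnitudes (any other $t$-element set contains an index of rank $<t$). For a $t$-element set $S$ one has $(\min_{i\in S}|X_i|)^{rt}\le\prod_{i\in S}|X_i|^{r}$, a product of $t$ factors each at least $(\min_{i\in S}|X_i|)^{r}$. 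Combining the two facts gives the deterministic bound $|M|^{rt}\le\sum_{|S|=t}\prod_{i\in S}|X_i|^{r}$, and taking expectations and using independence across rows yields $\E[|M|^{rt}]\le\binom{2t-1}{t}\,\E[|X_1|^{r}]^{t}$, as claimed.

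It then remains only to substitute back. With $r=1$ and the per-row $\ell_1$ bound, $\E[|\hat v_j-v_j|^{t}]\le\binom{2t-1}{t}(\|v\|_1/s)^{t}$; with $r=2$ and the per-row $\ell_2$ bound, $\E[(\hat v_j-v_j)^{2t}]\le\binom{2t-1}{t}(\|v\|_2^2/s)^{t}$. Using $\binom{2t-1}{t}\le 2^{2t-1}$ gives the two claimed bounds. (For $t=2$ one can keep the sharper constant $\binom{3}{2}=3$, recovering Theorems~\ref{thm:variance} and~\ref{thm:4th}.)

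The substantive content is concentrated in the two deterministic observations $|M|\le|X_{(t)}|$ and $|X_{(t)}|=\max_{|S|=t}\min_{i\in S}|X_i|$; once these are in place, the rest is a union bound over the $\binom{2t-1}{t}$ subsets together with independence. Accordingly, the main effort I expect is not in any calculation but in (i) stating the median-moment inequality at the right level of generality so that it can double as the stand-alone result about medians of i.i.d.\ variables promised in the abstract, and (ii) being careful that the only randomness properties invoked for the $X_i$ are exactly those the classical CountSketch analysis already provides, so that no stronger hash-function assumptions creep in.
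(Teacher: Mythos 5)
Your proposal is correct, and the reduction is the same as the paper's: write $\hat v_j - v_j$ as the median of the i.i.d.\ per-row errors, invoke the standard per-row bounds $\E[|\hat v_j^i-v_j|]\le\|v\|_1/s$ and $\E[(\hat v_j^i-v_j)^2]\le\|v\|_2^2/s$ (the paper's Lemma~\ref{lem:count}), and feed them into a general inequality $\E[|M|^{rt}]\le\binom{2t-1}{t}\E[|X_1|^r]^t$ for the median of $2t-1$ i.i.d.\ variables (the paper's Theorem~\ref{thm:moments}, with your $r$ playing the role of its $q$). Where you genuinely diverge is in how that median-moment inequality is proved. The paper argues on tails: it shows $\Pr[|M|^{rt}\ge x]\le\binom{2t-1}{t}\Pr[|X_1|^{rt}\ge x]^t$ by a union bound over $t$-subsets, writes the moment as $\int_0^\infty\Pr[\cdot\ge x]\,dx$, and then needs a separate analytic lemma ($\int_0^\infty f(\sqrt[t]{x})^t\,dx\le(\int_0^\infty f(x)\,dx)^t$ for non-increasing $f$, proved by a symmetrization of the $t$-fold integral over the ordered simplex) to convert the $t$-th power of a tail into the $t$-th power of an expectation. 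You instead prove the pointwise deterministic bound $|M|^{rt}\le|X_{(t)}|^{rt}\le\sum_{|S|=t}\prod_{i\in S}|X_i|^r$ via order statistics of the magnitudes, and then factor each summand's expectation by independence. Both routes give the identical constant $\binom{2t-1}{t}\le 2^{2t-1}$ and both use only $t$-wise independence across rows; your argument is more elementary in that it dispenses with the integral inequality entirely, while the paper's tail-based formulation is what it also reuses to exhibit the matching lower-bound example showing the constant is asymptotically tight. Two small points of care: the identity $|X_{(t)}|=\max_{|S|=t}\min_{i\in S}|X_i|$ is not needed in full --- the single inequality $|X_{(t)}|^{rt}\le\prod_{i\in S^*}|X_i|^r$ for the set $S^*$ of the $t$ largest magnitudes already suffices --- and your claim ``$|M|\le|X_{(t)}|$'' is exactly the paper's observation that the median can exceed $x$ in magnitude only if at least $t$ of the $X_i$ do, so the substance of the two proofs diverges only after that common step.
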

Thus we can bound the $t$th moment optimally (up to the $2^{2t-1}$ factor) in terms of $\|v\|_1$ and similarly for the $2t$-th moment in terms of $\|v\|_2$.

\subsection{Inner product estimation}
Another use case of CountSketch is in fast inner product estimation. Concretely, given two vectors $v,w \in \R^d$, if one builds a CountSketch on both vectors using \emph{the same} random hash functions $h_1,\dots,h_{2t-1}$ and $g_1,\dots,g_{2t-1}$ (i.e. the same seeds), then one can quickly estimate $\langle v, w \rangle$ from the two sketches. More precisely, let $A^v$ and $A^w$ denote the matrices constructed for $v$ and $w$, respectively. For any row $i$, the inner product $\langle A^v_i, A^w_i\rangle = \sum_{j=1}^s A^v_{i,j} A^w_{i,j}$ is an unbiased estimator of $\langle v,w \rangle$. Moreover, one can show that $\E[(\langle A^v_i, A^w_i\rangle - \langle v, w\rangle)^2] \leq 2 \|v\|_2^2 \|w\|_2^2/s$ if we replace $g$ by a 4-wise independent hash function (rather than just 2-wise). Combining this with Chebyshev's inequality yields
\[\Pr[|\langle A^v_i, A^w_i\rangle - \langle v, w\rangle| > (2 \sqrt{2}) \|v\|_2 \|w\|_2/\sqrt{s}] < 1/4.\]
Finally, as with frequency estimation (point queries), one can take the median over the $2t-1$ row estimates and apply a Chernoff bound to guarantee that the final estimate, denote it $X$, satisfies
\[\Pr[|X - \langle v,w \rangle| >  (2 \sqrt{2}) \|v\|_2 \|w\|_2/\sqrt{s}] < \exp(-\Omega(t)).\]
CountSketch with just a single row, $t=1$, is in fact identical to the popular \emph{feature hashing} scheme~\cite{weinberger2009feature}. Previous work has not shown any asymptotic benefits of taking the median of a small constant number of rows, using e.g.~$t=2$ or $t=3$. Our contribution is new bounds on the variance of such inner product estimates:
\begin{theorem}
  \label{thm:inner}
  For two vectors $v,w \in \R^d$, let $A^v$ and $A^w$ denote the two
matrices representing a CountSketch of the two vectors when using
the same random hash functions, where the $g_i$ are 4-wise independent. 
Let $X$ denote the median of $\langle A^v_i , A^w_i \rangle$ over rows $i=1,\dots,2t-1$.
Then CountSketch with $t=2$ satisfies
$$
\E[(X- \langle v, w\rangle)^2] \leq \min\{3 \|v\|^2_1\|w\|_1^2/s^2, 2\|v\|_2^2\|w\|_2^2/s\},
$$
and for $t > 1$:
\[\E[|X- \langle v, w\rangle|^t] \leq 2^{2t-1} \|v\|_1^t \|w\|_1^t/s^t,\text{ and}\]
\[\E[(X-\langle v, w\rangle)^{2t}] \le 4^{2t-1} \|v\|_2^{2t} \|w\|_2^{2t}/s^{t} \enspace . \] 
\end{theorem}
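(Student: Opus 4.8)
The plan is to view $X$ as the median of the i.i.d.\ row estimators $Y_i := \langle A^v_i, A^w_i\rangle$, $i=1,\dots,2t-1$ (they are i.i.d.\ because distinct rows use independent hash seeds), and to combine two ingredients: (i) single-row moment bounds for $Y_i$, and (ii) general bounds transferring moments of an i.i.d.\ sample to moments of its median. For (i), recall (as noted before the theorem) that each $Y_i$ is unbiased, $\E[Y_i]=\langle v,w\rangle=:\mu$, and that $\E[(Y_i-\mu)^2]\le 2\|v\|_2^2\|w\|_2^2/s$ when the $g_i$ are $4$-wise independent. We will also use an $\ell_1$-type first-moment bound: writing $Y_i-\mu=\sum_{j\neq\ell:\,h_i(j)=h_i(\ell)} g_i(j)g_i(\ell)v_jw_\ell$, moving the absolute value inside the sum and using $|g_i(j)g_i(\ell)|=1$ and $\Pr[h_i(j)=h_i(\ell)]=1/s$ (only $2$-wise independence of $h_i$ is needed here) gives $\E|Y_i-\mu|\le \tfrac1s\sum_{j,\ell}|v_j||w_\ell| = \|v\|_1\|w\|_1/s$.

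For (ii), the key pointwise observation is that the median $M=\mathrm{median}(Y_1,\dots,Y_{2t-1})$ is the $t$-th order statistic, so at least $t$ of the $Y_i$ lie in $[M,\infty)$ and at least $t$ lie in $(-\infty,M]$. Hence if $M\ge\mu$ there is a $t$-subset $S$ with $Y_i\ge M\ge\mu$ for $i\in S$, so $\prod_{i\in S}|Y_i-\mu|\ge |M-\mu|^t$; symmetrically if $M<\mu$. Therefore $|M-\mu|^t\le\sum_{|S|=t}\prod_{i\in S}|Y_i-\mu|$ and, squaring, $|M-\mu|^{2t}\le\sum_{|S|=t}\prod_{i\in S}(Y_i-\mu)^2$. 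Taking expectations and using independence and identical distribution, with $\binom{2t-1}{t}\le 2^{2t-1}$, yields $\E|M-\mu|^t\le 2^{2t-1}\big(\E|Y_1-\mu|\big)^t$ and $\E[(M-\mu)^{2t}]\le 2^{2t-1}\big(\E[(Y_1-\mu)^2]\big)^t$. For $t=2$ we additionally invoke the sharper general fact that the median of three i.i.d.\ variables has variance at most that of one of them, i.e.\ $\E[(M-\mu)^2]\le\E[(Y_1-\mu)^2]$.

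Combining (i) and (ii) finishes the proof. Plugging $\E|Y_1-\mu|\le\|v\|_1\|w\|_1/s$ into the $\ell_1$-moment bound gives $\E|X-\mu|^t\le 2^{2t-1}\|v\|_1^t\|w\|_1^t/s^t$. Plugging $\E[(Y_1-\mu)^2]\le 2\|v\|_2^2\|w\|_2^2/s$ into the $2t$-moment bound gives $\E[(X-\mu)^{2t}]\le 2^{2t-1}\big(2\|v\|_2^2\|w\|_2^2/s\big)^t = 2^{3t-1}\|v\|_2^{2t}\|w\|_2^{2t}/s^t\le 4^{2t-1}\|v\|_2^{2t}\|w\|_2^{2t}/s^t$, since $3t-1\le 4t-2$ for $t\ge1$. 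For $t=2$, the $\ell_1$ route with $\binom{3}{2}=3$ gives $\E[(X-\mu)^2]\le 3\big(\E|Y_1-\mu|\big)^2\le 3\|v\|_1^2\|w\|_1^2/s^2$, while the sharper variance fact gives $\E[(X-\mu)^2]\le\E[(Y_1-\mu)^2]\le 2\|v\|_2^2\|w\|_2^2/s$; taking the smaller of the two establishes the claimed minimum.

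\textbf{Main obstacle.}
The combination above is routine once the two pieces are in place; the substantive input is the general median-of-i.i.d.\ moment machinery, and in particular the sharp bound that passing to the median of three does not increase the variance (this is what yields the constant $2$ in the $t=2$ case, rather than the weaker constant $6$ coming from the $\ell_1$ route). Establishing that sharp bound — and, more generally, controlling $\E[(M-\mu)^{2t}]$ by $\big(\E[(Y_1-\mu)^2]\big)^t$ with a good constant — is where the real work lies, and is exactly the general-purpose tool the paper develops for Theorems~\ref{thm:variance}, \ref{thm:4th} and \ref{thm:2t}; the inner-product statement then follows by supplying the single-row moment bounds specific to $\langle A^v_i,A^w_i\rangle$.
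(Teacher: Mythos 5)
Your overall architecture is exactly the paper's: establish single-row moment bounds ($\E[\langle A^v_1,A^w_1\rangle]=\langle v,w\rangle$, $\E|\cdot-\langle v,w\rangle|\le\|v\|_1\|w\|_1/s$, and $\E[(\cdot-\langle v,w\rangle)^2]\le 2\|v\|_2^2\|w\|_2^2/s$ under $4$-wise independence, i.e.\ Lemma~\ref{lem:innerest}), then transfer them to the median via a general moment theorem (Theorem~\ref{thm:moments}). Where you genuinely diverge is in how you prove the transfer step: the paper bounds $\Pr[|M-\mu|^{tq}\ge x]\le\binom{2t-1}{t}\Pr[|Y_1-\mu|^{tq}\ge x]^t$ and then integrates, using the integral inequality of Lemma~\ref{lem:integral_ineq}; you instead use the pointwise observation that some $t$-subset $S$ satisfies $|Y_i-\mu|\ge|M-\mu|$ for all $i\in S$, hence $|M-\mu|^{tq}\le\sum_{|S|=t}\prod_{i\in S}|Y_i-\mu|^q$, and take expectations using independence. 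This is a clean, more elementary route that recovers the same constant $\binom{2t-1}{t}$ and avoids the integral lemma entirely. (One presentational slip: "squaring" the summed inequality does not literally give $|M-\mu|^{2t}\le\sum_{|S|=t}\prod_{i\in S}(Y_i-\mu)^2$; you should instead apply the same witnessing-subset argument directly to the squares, which is what you evidently intend.) The constant bookkeeping is correct: $\binom{2t-1}{t}\le 2^{2t-1}$ gives the $\ell_1$ bound, and $2^{2t-1}\cdot 2^t=2^{3t-1}\le 4^{2t-1}$ gives the $\ell_2$ bound.

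The one genuine gap is the second term of the minimum for $t=2$. You obtain $\E[(X-\langle v,w\rangle)^2]\le 2\|v\|_2^2\|w\|_2^2/s$ by invoking "the sharper general fact" that the median of three i.i.d.\ variables has second moment about the common mean no larger than a single copy's variance, and you defer its proof to the paper's general machinery. But that machinery does not deliver it: Theorem~\ref{thm:moments} controls $\E[|M-\mu|^{tq}]$, so for the median of three the second moment is only reachable via $q=1$, which bounds it by $3(\E|Y_1-\mu|)^2$ — an $\ell_1$-type quantity, not the variance. Elementary fallbacks lose a constant: $\mathrm{med}(a,b,c)\le(a+b+c)/2$ for nonnegative reals applied to $(Y_i-\mu)^2$ gives factor $3/2$, and Cauchy--Schwarz from the fourth-moment bound gives factor $\sqrt{3}$, both exceeding the stated constant $2$. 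So as written this term rests on an unproven (if plausible) claim. In fairness, the paper's own one-line derivation of Theorem~\ref{thm:inner} (and of the $\|v\|_2^2/s$ term in Theorem~\ref{thm:variance}) is equally silent on this point, so you have correctly isolated the one piece that needs a separate argument — you just haven't supplied it, and neither the cited corollary nor Theorem~\ref{thm:moments} does.
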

We note that the bounds in terms of $\|v\|_1^2$ and $\|w\|_1^2$ can be shown only assuming 2-wise independence of the $g_i$.
As with frequency estimation queries, a simple example demonstrates that the variance bound in terms of $\|v\|_1^2 \|w\|_1^2$ is false for $t=1$. Concretely, let $v$ have a single coordinate $v_i$ that is non-zero and let $w$ have a single coordinate $w_j$ with $j \neq i$ that is non-zero. Then $\langle v, w\rangle=0$, yet the probability that $v_i$ and $w_j$ hash to the same entry is $1/s$. In that case, the estimate is either $v_i w_j = \|v\|_1\|w\|_1$ or $-v_i w_j$. This implies that $\E[(X- \langle v, w\rangle)^2] = \|v\|_1^2 \|w\|_1^2/s$, i.e. a factor $s$ worse than the guarantees with three rows.

We have also performed experiments estimating the variance on real-world data sets, see Section~\ref{sec:inner}. When $s$ is large enough (so that $\|v\|_1^2\|w\|_1^2/s^2$ becomes the smallest term), these experiments support our theoretical findings as with the frequency estimation queries.

\paragraph{Discussion.}
Similarly to the frequency estimation queries, our new theoretical bounds and supporting experiments strongly advocates taking the median of at least $3$ rows when using CountSketch for inner product estimation. Equivalently, when using feature hashing for inner product estimation, one should take the median of at least $3$ independent instantiations. This reduces the variance by a linear factor in the number of columns/coordinates of the sketch. We remark that taking the median might not be allowed in all applications. For instance, when using CountSketch/feature hashing as preprocessing for Support Vector Machines, using one row corresponds to a kernel function, while this is not the case when taking the median of multiple row estimates. The median of three can thus not be directly used in this setting.

\subsection{New bounds on moments of the median}
We prove our new variance and moment bounds for CountSketch by showing general theorems relating moments of the median of i.i.d. random variables to smaller moments of the individual random variables. These new bounds are very natural and should have applications besides in CountSketch. Moreover, we show\short{ in the supplementary material} that they are asymptotically optimal.
\begin{theorem}
  \label{thm:moments}
  Let $X_1,\cdots,X_{2t-1}$ be $2t-1$ i.i.d. real valued random variables and let $Y$ denote their median. For all positive integers $q$ it holds that
\[\E[|Y-\E[X_1]|^{t q}] \leq \tbinom{2t-1}{t} \cdot \E[|X_1-\E[X_1]|^q]^{t} \enspace . \]
In particular, $\E[|Y-\E[X_1]|^{t q}] \leq 2^{2t-1} \cdot \E[|X_1-\E[X_1]|^q]^{t}$.
\end{theorem}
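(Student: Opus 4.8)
The plan is to reduce the median to an order statistic of the centered absolute values, and then bound a product of order statistics by an elementary-symmetric-polynomial sum. Write $\mu = \E[X_1]$ and, for each $i$, set $Z_i = |X_i - \mu|^q$; these are i.i.d.\ nonnegative random variables with $\E[Z_i] = \E[|X_1-\mu|^q]$. Denote by $Z_{(1)} \le \cdots \le Z_{(2t-1)}$ the order statistics of $Z_1,\dots,Z_{2t-1}$.

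First I would prove the pointwise bound $|Y-\mu|^q \le Z_{(t)}$. The median $Y$ is the $t$-th smallest of $X_1,\dots,X_{2t-1}$, so $Y-\mu$ is the $t$-th smallest of the centered values $X_i-\mu$. If $Y-\mu \ge 0$, then at least $t$ of the values $X_i-\mu$ are $\ge Y-\mu \ge 0$, hence at least $t$ of the $Z_i$ satisfy $Z_i \ge |Y-\mu|^q$; symmetrically, if $Y-\mu \le 0$, then at least $t$ of the $X_i-\mu$ are $\le Y-\mu \le 0$, again giving at least $t$ indices with $Z_i \ge |Y-\mu|^q$. In either case fewer than $t$ of the $Z_i$ are strictly below $|Y-\mu|^q$, so the $t$-th smallest satisfies $Z_{(t)} \ge |Y-\mu|^q$. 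Raising to the $t$-th power gives $|Y-\mu|^{tq} \le Z_{(t)}^t$ pointwise.

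Next, since $Z_{(t)} \le Z_{(j)}$ for every $j \ge t$, we get $Z_{(t)}^t \le Z_{(t)} Z_{(t+1)} \cdots Z_{(2t-1)}$, the product of the $t$ largest of the $Z_i$. This product equals $\prod_{i\in S} Z_i$ for the (random) index set $S$ of the $t$ largest values, and because every $Z_i \ge 0$ it is dominated by the full symmetric sum, $\prod_{i\in S} Z_i \le \sum_{T \subseteq [2t-1],\, |T| = t} \prod_{i\in T} Z_i$. Taking expectations and using independence, each of the $\binom{2t-1}{t}$ terms contributes $\E[Z_1]^t$, so $\E[|Y-\mu|^{tq}] \le \binom{2t-1}{t}\,\E[|X_1-\mu|^q]^t$. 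The ``in particular'' statement then follows from $\binom{2t-1}{t} \le \sum_{k=0}^{2t-1}\binom{2t-1}{k} = 2^{2t-1}$.

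I do not expect a serious obstacle; the only place requiring care is the first step, where one must handle both signs of $Y-\mu$ and correctly translate ``at least $t$ of the $|X_i-\mu|$ exceed a threshold'' into a lower bound on the $t$-th order statistic $Z_{(t)}$ — as opposed to the median of the $Z_i$, which would be the wrong quantity. One should also confirm the edge behavior when some $Z_i$ equal the threshold, but using non-strict inequalities throughout makes this automatic.
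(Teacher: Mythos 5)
Your proof is correct, and it takes a genuinely different route from the paper. The paper argues at the level of tail probabilities: it observes that $|Y-\mu|^{tq}\ge x$ forces at least $t$ of the $X_i$ to satisfy $|X_i-\mu|^{tq}\ge x$, applies a union bound over the $\binom{2t-1}{t}$ subsets to get $\Pr[|Y-\mu|^{tq}\ge x]\le\binom{2t-1}{t}\Pr[|X_1-\mu|^{tq}\ge x]^t$, and then integrates; converting $\int_0^\infty\Pr[|X_1-\mu|^q\ge\sqrt[t]{x}]^t\,dx$ into $\bigl(\int_0^\infty\Pr[|X_1-\mu|^q\ge x]\,dx\bigr)^t$ requires a separate integral inequality for non-increasing functions, proved via a symmetrization of the $t$-fold integral over the ordered simplex. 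You instead work pointwise with the random variables: the same ``at least $t$ exceed the threshold'' observation gives $|Y-\mu|^{tq}\le Z_{(t)}^t\le Z_{(t)}Z_{(t+1)}\cdots Z_{(2t-1)}\le\sum_{|T|=t}\prod_{i\in T}Z_i$, and independence finishes in one line. Your subset-sum domination is exactly the algebraic counterpart of the paper's union bound, but by applying it to the variables rather than to events you bypass the integral lemma entirely, which makes the argument shorter and more elementary; the paper's detour has the side benefit of isolating that integral inequality as a reusable statement. Both arguments yield the identical constant $\binom{2t-1}{t}$, all your steps (the two-sided case analysis for the sign of $Y-\mu$, the passage from ``at most $t-1$ values below the threshold'' to $Z_{(t)}\ge|Y-\mu|^q$, and the nonnegativity needed for the product-to-sum bound) check out.
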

In many data science applications, the $X_i$ would be unbiased estimators of some desirable function of a data set, such as e.g. the coordinate $v_i$ in a vector $v$. Theorem~\ref{thm:moments} thus gives a bound on the $tq$-th moment of the estimation error of the median $Y$ in terms of just the $q$-th moment of a single variable. We remark that the median of $2t-1$ unbiased estimators is not necessarily itself an unbiased estimator, thus the bound on $\E[(Y-\E[X_1])^{tq}]$ is much more desirable than a bound on e.g. $\E[(Y-\E[Y])^{tq}]$ as the mean of $Y$ might be tricky to prove an exact bound for. However, one can, in fact, derive a bound on the variance of $Y$ itself (on $\E[(Y-\E[Y])^{2}$) directly from Theorem~\ref{thm:moments}:
\begin{corollary}
  \label{cor:variance}
  Let $X_1,X_2,X_3$ be i.i.d. real valued random variables and let $Y$ denote their median. Then $$\Var(Y)  \leq \E[(Y-\E[X_1])^2] \leq 3 \cdot \E[|X_1-\E[X_1]|]^2 \enspace .$$
\end{corollary}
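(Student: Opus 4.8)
The plan is to derive both inequalities directly, the second one being an immediate specialization of Theorem~\ref{thm:moments}.

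For the first inequality, I would use the elementary fact that for any real-valued random variable $Y$ with finite second moment, the map $c \mapsto \E[(Y-c)^2]$ is minimized at $c = \E[Y]$, since $\E[(Y-c)^2] = \Var(Y) + (\E[Y]-c)^2$. Taking $c = \E[X_1]$ gives $\Var(Y) \le \E[(Y-\E[X_1])^2]$. This step matters precisely because the median of unbiased estimators need not be unbiased, so $\E[Y]$ may differ from $\E[X_1]$; nonetheless the inequality holds for the specific choice $c = \E[X_1]$.

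For the second inequality, I would invoke Theorem~\ref{thm:moments} with $t = 2$ and $q = 1$. Then $2t-1 = 3$, so $Y$ is exactly the median of the three i.i.d.\ variables $X_1, X_2, X_3$ as in the hypothesis; $tq = 2$; and $\tbinom{2t-1}{t} = \tbinom{3}{2} = 3$. The theorem thus gives $\E[|Y - \E[X_1]|^{2}] \le 3 \cdot \E[|X_1 - \E[X_1]|]^{2}$, and since $\E[|Y-\E[X_1]|^2] = \E[(Y-\E[X_1])^2]$ this is precisely the claimed bound. Chaining the two displays completes the proof.

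Since the statement is a corollary of an already-established theorem, there is essentially no obstacle; the only thing to verify carefully is the bookkeeping — that the parameter choice $t=2$ corresponds to "three i.i.d.\ variables" and to the second moment, and that the binomial coefficient evaluates to $3$.
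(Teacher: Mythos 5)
Your proposal is correct and matches the paper's own proof: both steps — the variational characterization $\Var(Y) = \min_c \E[(Y-c)^2] \le \E[(Y-\E[X_1])^2]$ and the invocation of Theorem~\ref{thm:moments} with $t=2$, $q=1$, $\tbinom{3}{2}=3$ — are exactly the argument given there. Nothing is missing.
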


\begin{proof}
 From Theorem~\ref{thm:moments} with $q=1$ we have
\[\E[(Y-\E[X_1])^2] \leq 3 \cdot \E[|X_1-\E[X_1]|]^2.\]
 Moreover, the minimizing value $\mu$ for the function $\mu \mapsto \E[(Y-\mu)^2]$ is the mean $\mu = \E[Y]$. Therefore we have $\Var(Y) = \E[(Y-\E[Y])^2] \leq \E[(Y-\E[X_1])^2] \leq 3 \cdot \E[|X_1-\E[X_1]|]^2$.
\end{proof}

In this paper, we mainly consider the case $t=2$ with 3 rows --- or equivalently $3$ i.i.d. random variables.

\subsection{Related work}\label{sec:related}

CountSketch was originally proposed in~\cite{charikar2004finding} as a method for finding heavy hitters (i.e.,~frequently occurring elements) in a data stream.
Though there are better methods for finding heavy hitters in insertion-only data streams, CountSketch has the advantage that it is a \emph{linear sketch}, meaning that sketches can be subtracted to form a sketch of the difference of two vectors.
It is known to be space-optimal for the problem of finding approximate $L_p$ heavy hitters in the turnstile streaming model, where both positive and negative frequency updates are possible~\cite{jowhari2011tight}.

\paragraph{Analysis by Minton and Price.}
An improved analysis of the error distribution of CountSketch was given in~\cite{mintonP14improved}, building on work of~\cite{jowhari2011tight}.
The analysis gives non-trivial bounds only when $t$ is a sufficiently large (unspecified) constant, and the exposition focuses on the case $t = \Theta(\log n)$, where $n$ is the dimension of the vector $v$.
Their stated error bounds are incomparable to ours since they are expressed in terms of (residual) $L_2$ norm of $v$.

The reader may wonder if it is possible to derive our results from the analysis in~\cite{mintonP14improved}.
Their error bound for CountSketch based on $\|v_{[\overline{k}]}\|_2$, where $\|v_{[\overline{k}]}\|_2$ is~$v$ with the largest $k$ entries set to $0$. More concretely, it is shown that for a single row of CountSketch, it holds that $\Pr[(\hat{v}_j^i - v_j)^2 > c_0\|v_{[\overline{c_1 s}]}\|^2_2/s] < 1/4$ for some constants $c_0,c_1$. 
The crucial observation is that all entries of $v_{[\overline{c_1 s}]}$ are bounded by $\|v\|_1/(c_1 s)$ and therefore one has $\|v_{[\overline{c_1 s}]}\|^2_2 = O(\|v\|_1\|v\|_1/s)$. Inserting this gives $\Pr[(\hat{v}_j^i - v_j)^2 > c_2 \|v\|^2_1/s^2] < 1/4$ and this may be combined with Chernoff bounds to give high probability bounds for the median of multiple rows in terms of $\|v\|_1$. 
Already with one row, this looks similar to our bound on the variance of the median of $3$ rows (Theorem~\ref{thm:variance}) which stated that $\E[(\hat{v}_j - v_j)^2] \leq 3 \|v\|^2_1/s^2$. However, as our counterexample above suggests, there is no way of extending the ideas of~\cite{mintonP14improved} to prove $\E[(\hat{v}_j^i - v_j)^2] = O(\|v\|^2_1/s^2)$ as it is simply false for $t=1$. Indeed the way~\cite{mintonP14improved} proves their bound is by analysing the $c_1 s$ largest entries separately from the remaining entries, bounding $\E[(\hat{v}_j^i - v_j)^2]$ only for the small entries in $v_{[\overline{c_1 s}]}$. 
Thus our new variance bounds do not follow from their work.

The experiments in~\cite{mintonP14improved} focus on the setting where $t$ is relatively large, with 20 or 50 rows, i.e., about an order of magnitude larger space usage that we have for $t = 2$.

\paragraph{Dimension reduction.}
CountSketch can be used as a \emph{dimensionality reduction} technique that is simpler and more computationally efficient than the classical Johnson-Lindenstrauss embedding~\cite{johnson1984extensions}.
In this setting there is no estimator, the sketch vector is simply considered a vector in $ts$ dimensions.
Generalized versions of CountSketch have been shown to yield a time-accuracy trade-off~\cite{dasgupta2010sparse,kane2014sparser}.

In machine learning, a variant of CountSketch, now known as \emph{feature hashing}, was independently introduced in~\cite{weinberger2009feature}, focusing on applications in multitask learning.
Feature hashing reduces variance in a slightly different way than CountSketch, by initially increasing the dimension of the input vector by a factor $t$ in a way that preserves $L_2$ distances exactly but reduces the $L_\infty$ norm of vectors by a factor $\sqrt{t}$.
In~\cite{chen2015compressing}, CountSketch/feature hashing was wired into the architecture of a neural network in order to reduce the number of model parameters (without the use of medians).
CountSketch has also been used in the construction of \emph{random feature mappings}~\cite{pham2013fast, ahle2020oblivious}, which can be seen as dimension-reduced versions of explicit feature maps.

\paragraph{Further machine learning applications.}
CountSketch, with the median estimator, has been used in several machine learning applications.
In~\cite{aghazadeh2018mission}, CountSketch was used with $t=2$ (3 rows) for large-scale feature selection.
In~\cite{spring2019compressing}, CountSketch was used for compressing gradient optimizers in stochastic gradient descent.
The related \emph{count-min} sketch~\cite{cormode2005improved}, which is the special case of CountSketch where we fix $s(x)=1$, is a popular choice in applications where vectors have non-negative entries.
The count-min estimator takes advantage of non-negativity by taking the minimum of $t$ estimates, and the error distribution can be analyzed in terms of the $L_1$ norm of $v$.
We note that a count-min sketch with a fully random hash function can be used to simulate a CountSketch with $s/2$ entries computing the pairwise difference of entries whose index differ in the last bit (effectively using the least significant bit as the hash function $s$). 
\section{Moments of the Median}
In this section, we prove our new inequalities for moments of the
median. We in fact prove a more general theorem for the median of $2t-1$ i.i.d. random variables. We first state and proof an integral inequality which the proof of the theorem relies on.\allowdisplaybreaks
\begin{lemma} \label{lem:integral_ineq}
Let $f: \mathbb{R}^+ \rightarrow \mathbb{R}^+$ be a non-increasing function and let $t$ be a positive integer. Then
\[
\int_0^\infty f(\sqrt[t]{x})^t dx \leq \Big(\int_0^\infty f(x) dx \Big)^t \enspace .
\]
\end{lemma}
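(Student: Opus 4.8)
The plan is to first get rid of the $t$-th root by the change of variables $x=u^{t}$. Since $dx=t\,u^{t-1}\,du$ and $\sqrt[t]{x}=u$, the left-hand side becomes
\[
\int_0^\infty f(\sqrt[t]{x})^{t}\,dx \;=\; t\int_0^\infty f(u)^{t}\,u^{t-1}\,du ,
\]
so it suffices to prove $t\int_0^\infty f(u)^{t}u^{t-1}\,du \le \bigl(\int_0^\infty f(u)\,du\bigr)^{t}$. If the right-hand side is infinite there is nothing to prove, so from now on I may assume $f\in L^{1}(0,\infty)$; note also that $f$ is automatically measurable, being monotone.

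\emph{Main approach (symmetrization).} Expand the right-hand side as a $t$-fold product integral,
\[
\Bigl(\int_0^\infty f(u)\,du\Bigr)^{t} \;=\; \int_{(0,\infty)^{t}} f(u_1)\cdots f(u_t)\,du_1\cdots du_t .
\]
Up to the diagonals (a set of measure zero), $(0,\infty)^{t}$ splits into the $t!$ regions indexed by the ordering of the coordinates, and relabelling shows each contributes the same amount, so the integral equals $t!\int_{0<u_1<\dots<u_t} f(u_1)\cdots f(u_t)\,d\mathbf u$. On this ordered region, $f$ non-increasing gives $f(u_1)\ge\dots\ge f(u_t)\ge 0$, hence $f(u_1)\cdots f(u_t)\ge f(u_t)^{t}$. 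Combining this with the simplex-volume identity $\int_{0<u_1<\dots<u_{t-1}<u_t}du_1\cdots du_{t-1}=u_t^{t-1}/(t-1)!$ yields
\[
\Bigl(\int_0^\infty f(u)\,du\Bigr)^{t} \;\ge\; t!\int_0^\infty f(u_t)^{t}\,\frac{u_t^{t-1}}{(t-1)!}\,du_t \;=\; t\int_0^\infty f(u)^{t}u^{t-1}\,du ,
\]
which is exactly the bound needed. All interchanges are justified by Tonelli's theorem since every integrand is non-negative.

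\emph{Alternative approach.} Set $G(u)=\int_0^{u}f$. Monotonicity of $f$ gives the pointwise inequality $u\,f(u)\le\int_0^{u}f(v)\,dv=G(u)$, so $f(u)^{t}u^{t-1}=f(u)\,(u f(u))^{t-1}\le G'(u)\,G(u)^{t-1}$ a.e.; integrating and using the fundamental theorem of calculus on the absolutely continuous function $G^{t}$ gives $t\int_0^\infty f^{t}u^{t-1}\le t\int_0^\infty G'G^{t-1}=\int_0^\infty (G^{t})'=G(\infty)^{t}=\bigl(\int_0^\infty f\bigr)^{t}$.

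\emph{Expected obstacle.} The mathematical content is elementary; the only thing requiring care is the measure-theoretic bookkeeping — that the diagonals carry no mass and Tonelli applies in the first argument, or that $G$ is absolutely continuous with $G'=f$ almost everywhere so that the chain rule and fundamental theorem of calculus may be applied to $G^{t}$ in the second. Neither is a real difficulty, so I expect the symmetrization proof to go through cleanly.
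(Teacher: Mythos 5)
Your main (symmetrization) argument is correct and is essentially identical to the paper's proof: the paper likewise expands $\bigl(\int_0^\infty f\bigr)^t$ as a $t$-fold integral, restricts to the ordered region with a factor $t!$, bounds the product by $f(x_t)^t$ using monotonicity, integrates out the simplex to get $x_t^{t-1}/(t-1)!$, and finishes with the same substitution $x=u^t$ (applied at the end rather than at the start). Your alternative argument via $G(u)=\int_0^u f$, the pointwise bound $u\,f(u)\le G(u)$, and the identity $t\int_0^\infty G'G^{t-1}=G(\infty)^t$ is also correct and is a genuinely different, more elementary route that avoids the multidimensional bookkeeping entirely; its only cost is the (routine) verification that $G^t$ is absolutely continuous so the fundamental theorem of calculus applies.
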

\begin{proof}
Since the function is non-increasing, it is measurable. Moreover, since it is non-negative, the integrals are defined (possibly equal to $+\infty$). 
We have:
\begin{align}
\Big(\int_0^\infty & f(x) dx \Big)^t \nonumber\\
& = \int_0^\infty \dots \int_0^\infty \prod_{i=1}^t f(x_i) dx_1 \dots dx_t \nonumber\\
& = t! \int_0^\infty \int_0^{x_t} \dots \int_0^{x_2} \prod_{i=1}^t f(x_i) dx_1 \dots dx_t \label{eq:one}\\
& \geq t! \int_0^\infty \int_0^{x_t} \dots \int_0^{x_2} f(x_t)^t dx_1 \dots dx_t\label{eq:two}\\
& = t! \int_0^\infty f(x_t)^t \int_0^{x_t} \dots \int_0^{x_2} 1 dx_1 \dots dx_t\nonumber\\
& = t! \int_0^\infty f(x_t)^t \frac{x_t^{t-1}}{(t-1)!} dx_t \label{eq:three}\\
& = \int_0^\infty f(x)^t t x^{t-1} dx = \int_0^\infty f(\sqrt[t]{x})^t dx \enspace . \nonumber
\end{align}

The integral in $(\ref{eq:one})$ is exactly over the set $0 \leq x_1 \leq x_2 \leq \dots \leq x_t$. There are $t!$ such sets, each determined by an ordering of the variables. Since $\prod_{i=1}^t$ is a symmetric function (by comutativity) it integrates to the same value over each of these sets. Moreover, these sets partition the set $[0,\infty)^{t}$ (up to a set of measure 0 corresponding to when two variables are equal). Since we have a partition into $t!$ sets and the integral over each set from the partition is the same, the integral over each set is a $t!$-fraction of the integral over the whole space, and $(\ref{eq:one})$ holds. $(\ref{eq:two})$ holds because $f$ is non-increasing and $x_1\leq x_2, \cdots, x_t$. $(\ref{eq:three})$ holds because the inner integrals correspond to the volume of the $t-1$-dimensional ordered simplex scaled by a factor of $x_t$ and the volume of $t-1$-dimensional ordered simplex is $\frac{1}{(t-1)!}$ (this holds by symmetry, and can be argued the same way as $(\ref{eq:one})$). The final equality holds by substituting $x = x^t$.
\end{proof}
\begin{customthm}{\ref{thm:moments}}
  Let $X_1,\cdots,X_{2t-1}$ be $2t-1$ i.i.d. real valued random variables and let $Y$ denote their median. For all positive integers $q$ it holds that
\[\E[|Y-\E[X_1]|^{t q}] \leq \tbinom{2t-1}{t} \cdot \E[|X_1-\E[X_1]|^q]^{t} \enspace . \]
In particular, $\E[|Y-\E[X_1]|^{t q}] \leq 2^{2t-1} \cdot \E[|X_1-\E[X_1]|^q]^{t}$.
\end{customthm}
\begin{proof}
Notice that since $Y$ is the median of $X_1,\dots,X_{2t-1}$ and the $X_i$'s have the same mean, we can only have $|Y-\E[X_1]|^{t q} \geq x$ when at least $t$ variables $X_i$ have $|X_i-\E[X_i]|^{t q} \geq x$. There are $\binom{2t-1}{t}$ choices for such $t$ variables, so by the union bound, independence and identical distribution of the $X_i$'s, we have for any $x$ that:
\begin{align*}
\Pr[|Y-\E[X_1]|^{t q} \geq x] \leq \tbinom{2t-1}{t} \Pr[|X_1-\E[X_1]|^{t q} \geq x]^t.
\end{align*}
We can thus bound $\E[|Y-\E[X_1]|^{tq}]$ as:
\begin{align*}
\E[|Y&-\E[X_1]|^{t q}] \\
=& \int_0^\infty \Pr[|Y-\E[X_1]|^{t q} \geq x] dx  \\
\leq& \tbinom{2t-1}{t} \int_0^\infty \Pr[|X_1-\E[X_1]|^{t q} \geq x]^t dx \\
=& \tbinom{2t-1}{t} \int_0^\infty \Pr[|X_1-\E[X_1]|^{q} \geq \sqrt[t]{x}]^t dx\\
\leq& \tbinom{2t-1}{t} \Big(\int_0^\infty \Pr[|X_1-\E[X_1]|^{q} \geq x] dx \Big)^t \\
=& \tbinom{2t-1}{t} \cdot \E[|X_1-\E[X_1]|^q]^{t},
\end{align*}
where the first and last equalities hold by a standard identity for non-negative random variables, and the last inequality holds by Lemma \ref{lem:integral_ineq} since $\Pr[|X_1-\E[X_1]|^{q} \geq x]$ is a non-increasing non-negative function.
\end{proof}

\supplementary{
The bound shown in this section can easily be seen to be asymptotically optimal. Consider $X_i$'s which take value $k$ with probability $1/k$ and are zero otherwise. Then 
\begin{align*}
\E[|&Y-\E[X_1]|^{qt}]\\
=& (k-1)^{qt} \Pr[Y = k]\\
\geq & (k-1)^{qt} \tbinom{2t-1}{t} \Pr[X_1=k]^t (1-\Pr[X_1=k])^{t-1}\\
=& \frac{(k-1)^{qt}}{k^t} \tbinom{2t-1}{t} (1-\frac{1}{k})^{t-1}\\
\sim& (k-1)^{(q-1)t} \tbinom{2t-1}{t} 
\end{align*}
where the limit in $\sim$ is taken for $k \rightarrow \infty$. On the other hand, the bound given by our theorem is
\begin{align*}
    \tbinom{2t-1}{t}& \E[|X_1-\E[X_1]|^q]^{t} \\
    =& \tbinom{2t-1}{t} (\frac{1}{k} (k - 1)^q)^t \\
    \sim& (k-1)^{(q-1)t} \tbinom{2t-1}{t} 
\end{align*}
}

\section{CountSketch}
In this section, we prove our new bounds on the variance
(Theorem~\ref{thm:variance}) and 4th moment (Theorem~\ref{thm:4th})
for CountSketch with $3$ rows ($t=2$) as well as our general theorem
with the median of $2t-1$ estimates (Theorem~\ref{thm:moments}). 

\supplementary{
The bounds on frequency estimation are optimal up to $1+o(1)$ factor. This can be seen by considering input consisting of one item with frequency $s$ and a sketch of size $s$. Querying an item with frequency $0$ then reproduces the example from the last section for which our bounds are optimal up to $1+o(1)$ factor.
}

\paragraph{Frequency estimation.}
Recall that CountSketch with three rows computes an estimate
$\hat{v}_j^i$ for each of three rows $i=1,2,3$ and returns the median
$\hat{v}_j$ as its estimate of $v_j$. From Theorem~\ref{thm:moments},
we see that to obtain variance and 4th moment bounds for
$\hat{v}_j$, we only need to bound $\E[|\hat{v}_j^1 -
\E[\hat{v}_j^1]|^q]$ for $q=1,2$. Such bounds essentially follow from
previous work and are as follows:
\begin{lemma}
  \label{lem:count}
CountSketch satisfies $\E[\hat{v}_j^1] = v_j$, $\E[|\hat{v}_j^1 -
v_j|] \leq \|v\|_1/s$ and $\E[(\hat{v}_j^1-v_j)^2] \leq \|v\|_2^2/s$.
\end{lemma}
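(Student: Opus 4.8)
The plan is to establish the three claims of Lemma~\ref{lem:count} by a direct first-moment computation on a single row of CountSketch, which is the classical analysis. Fix the row index and write $\hat v_j^1 = g(j) A_{h(j)} = v_j + \sum_{k \neq j} g(j)g(k) v_k \mathbf{1}[h(k) = h(j)]$, where I suppress the row subscript on $g$ and $h$. The error term is $Z := \sum_{k\neq j} g(j)g(k) v_k \mathbf{1}[h(k)=h(j)]$, so everything reduces to controlling $Z$.

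First I would prove unbiasedness: by $2$-wise independence of $g$, for each $k\neq j$ we have $\E[g(j)g(k)] = 0$, and $g$ is independent of $h$, so each summand has expectation $0$ and hence $\E[\hat v_j^1] = v_j$. Next, for the $\ell_1$ bound I would take absolute values inside the sum and use that $|g(j)g(k)| = 1$: $|Z| \le \sum_{k\neq j} |v_k| \mathbf{1}[h(k)=h(j)]$, and by $2$-wise independence of $h$ (together with independence of $h_i$ across keys being unnecessary here — pairwise suffices) we get $\Pr[h(k)=h(j)] = 1/s$, so $\E[|Z|] \le \sum_{k\neq j}|v_k|/s \le \|v\|_1/s$. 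For the variance bound I would compute $\E[Z^2] = \sum_{k\neq j}\sum_{\ell\neq j} \E[g(j)^2 g(k)g(\ell)]\Pr[h(k)=h(j),\,h(\ell)=h(j)]$; here $2$-wise independence of $g$ kills all cross terms $k\neq\ell$ (since then $\E[g(k)g(\ell)]=0$ and this factors out, using that $g$ is independent of $h$), leaving only the diagonal $\E[Z^2] = \sum_{k\neq j} v_k^2 \Pr[h(k)=h(j)] = \sum_{k\neq j} v_k^2/s \le \|v\|_2^2/s$.

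I do not expect a genuine obstacle here, since these are textbook-style computations; the only point requiring a little care is the dependence/independence bookkeeping — one must use that the sign hash $g$ and the bucket hash $h$ are independent of each other, that $2$-wise independence of $g$ suffices to zero out the off-diagonal terms in the second moment, and that $2$-wise independence of $h$ is exactly what gives the collision probability $1/s$. I would also note in passing that conditioning on $h$ first and then taking the expectation over $g$ is the cleanest way to present the second-moment computation, so that the $\Pr[\cdot]$ factors come out correctly. Once Lemma~\ref{lem:count} is in hand, Theorems~\ref{thm:variance} and~\ref{thm:4th} follow by applying Theorem~\ref{thm:moments} with $t=2$ and $q=1,2$ respectively: $\E[(\hat v_j-v_j)^2] \le \binom{3}{2}\E[|\hat v_j^1-v_j|]^2 \le 3\|v\|_1^2/s^2$, and $\E[(\hat v_j-v_j)^4]\le \binom{3}{2}\E[(\hat v_j^1-v_j)^2]^2 \le 3\|v\|_2^4/s^2$, with the $\min$ in Theorem~\ref{thm:variance} coming from also invoking the trivial single-row bound $\E[(\hat v_j-v_j)^2]\le\E[(\hat v_j^1-v_j)^2]\le\|v\|_2^2/s$ (which holds since the median is one of the three values, or more carefully by a similar moment argument).
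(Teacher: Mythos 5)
Your proof is correct and follows essentially the same route as the paper's: expand $\hat v_j^1 = v_j + \sum_{k\neq j} g(j)g(k)v_k \mathbf{1}[h(k)=h(j)]$, use independence of $g$ and $h$ plus $2$-wise independence of $g$ for unbiasedness and to kill the off-diagonal terms in the second moment, and use the collision probability (the paper writes $\Pr[h(k)=h(j)]\le 1/s$ where you write equality, which is immaterial) for the $\ell_1$ and $\ell_2$ bounds. The only caveat is outside the lemma itself: your closing parenthetical that the $\|v\|_2^2/s$ term of the min in Theorem~\ref{thm:variance} follows ``since the median is one of the three values'' is not a valid argument on its own (the median of three i.i.d.\ squared errors can exceed the mean of one), so that part would need the more careful tail-based argument you allude to.
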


Theorem~\ref{thm:variance} follows by instantiating
Theorem~\ref{thm:moments} with $q=1$ and the facts $\E[\hat{v}_j^1] = v_j$, $\E[|\hat{v}_j^1 -
v_j|] \leq \|v\|_1/s$ from
Lemma~\ref{lem:count}. Theorem~\ref{thm:4th} follows by instantiating
Theorem~\ref{thm:moments} with $q=2$ and the facts $\E[\hat{v}_j^1] =
v_j$, $\E[(\hat{v}_j^1-v_j)^2] \leq \|v\|_2^2/s$ from
Lemma~\ref{lem:count}. Finally, Theorem~\ref{thm:2t} also follows as
an immediate corollary of Theorem~\ref{thm:moments} and
Lemma~\ref{lem:count}.

\supplementary{
We give the proof of Lemma~\ref{lem:count} in the following for
completeness:
\begin{proof}[Proof of Lemma~\ref{lem:count}]
  For short, let $X = \hat{v}_j^1$, $g = g_1$ and $h=h_1$. We then have:
  \begin{eqnarray*}
    \E[X] &=& \E[g(j)A_{1,h(j)}] \\
          &=& \E[g(j)(g(j)v_j + \sum_{i \neq j} 1_{h(i)=h(j)} g(i) v_i)] \\
    &=& v_j + \sum_{i \neq j} \E[1_{h(i)=h(j)} g(j) g(i)] v_i.
  \end{eqnarray*}
  By independence of $h$ and $g$, and $g$ being 2-wise independent,
  we have
  $$
  \E[1_{h(i)=h(j)} g(j) g(i)] = \E[1_{h(i)=h(j)}
  ]\E[g(j)]\E[g(j)] = 0$$
  and we conclude $\E[X] = v_j$. Next consider
  \begin{eqnarray*}
    \E[|X - \E[X]|] &=& \E[|g(j)A_{1,h(j)} - v_j|] \\
                    &=& \E[|\sum_{i \neq j} 1_{h(i)=h(j)} g(j) g(i) v_i|] \\
                    &\leq& \E[\sum_{i \neq j} |1_{h(i)=h(j)} ||g(j) g(i)||v_i|] \\
                    &=& \sum_{i \neq j} \E[1_{h(i)=h(j)}] |v_i| \\
                    &\leq& \sum_i |v_i|/s \\
    &=& \|v\|_1/s.
  \end{eqnarray*}
  Above we used 2-wise independence of $h$ when we concluded that
  $\E[1_{h(i)=h(j)}] = \Pr[h(i)=h(j)] \leq 1/s$ for all $i \neq j$.
Finally consider:
  \begin{align*}
    & \E[(X - \E[X])^2]\\
    &= \E\left[ \left(\sum_{i \neq j} 1_{h(i)=h(j)} g(j)
                          g(i) v_i \right)^2\right] \\
    &= \E\left[ \sum_{i \neq j} \sum_{k \neq j} 1_{h(i)=h(j)} 1_{h(k)=h(j)} g(j)^2
        g(i)g(k) v_iv_k \right] \\
    &= \sum_{i \neq j} \sum_{k \neq j} \E[1_{h(i)=h(j)} 1_{h(k)=h(j)}]\E[ 
        g(i)g(k)]v_iv_k.
  \end{align*}
  Here we notice by 2-wise independence of $g$ that $\E[g(i)g(k)] = 0$
  whenever $i \neq k$ and $1$ otherwise. The above is thus bounded by:
  \begin{eqnarray*}
    \E[(X - \E[X])^2] &\leq& \sum_{i \neq j} \E[1_{h(i)=h(j)}^2] v_i^2
    \\
                      &\leq& \sum_{i \neq j} v_i^2/s \\
    &\leq& \|v\|_2^2/s.
  \end{eqnarray*}
\end{proof}
}

\paragraph{Inner product estimation.}
Similarly to the case of frequency estimation (point queries), we
prove our new guarantees in Theorem~\ref{thm:inner} by invoking our
general theorems on moments of the median. All we need is moment
bounds for a single row. The following is more or less standard. \short{We include the proof in the supplementary material.}
\supplementary{We show the following (which are more or less standard):}
\begin{lemma}
  \label{lem:innerest}
For two vectors $v,w \in \R^d$, let $A^v$ and $A^w$ denote the two
matrices representing a CountSketch of the two vectors when using
the same random hash functions. Then $\E[\langle A^v_1, A^w_1 \rangle]
= \langle v, w\rangle$ and $\E[|\langle A^v_1, A^w_1 \rangle -
\langle v ,w \rangle |] \leq \|v\|_1\|w\|_1/s$. Moreover, if $g$ is
4-wise independent, then we also have $\E[(\langle A^v_1, A^w_1 \rangle -
\langle v ,w \rangle )^2] \leq 2\|v\|_2^2\|w\|_2^2/s$.
\end{lemma}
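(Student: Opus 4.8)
The plan is to expand the single-row estimator $\langle A^v_1, A^w_1\rangle$ in terms of the shared hash functions $h = h_1$ and $g = g_1$, peel off a ``diagonal'' part that equals $\langle v,w\rangle$ exactly, and then bound the first absolute moment and the second moment of the remaining error separately.

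First I would write $A^v_{1,k} = \sum_{i\,:\,h(i)=k} g(i)v_i$ and likewise for $w$, so that
\[
\langle A^v_1, A^w_1\rangle \;=\; \sum_{k=1}^s A^v_{1,k}A^w_{1,k} \;=\; \sum_{i,j\,:\,h(i)=h(j)} g(i)g(j)\,v_iw_j .
\]
The diagonal terms $i=j$ contribute $\sum_i g(i)^2 v_iw_i = \langle v,w\rangle$, leaving the error $E := \sum_{i\ne j,\,h(i)=h(j)} g(i)g(j)\,v_iw_j$. Unbiasedness follows because, by independence of $h$ and $g$ and $\E[g(i)]=0$ ($2$-wise independence), each off-diagonal term satisfies $\E[\mathbf 1_{h(i)=h(j)}\,g(i)g(j)] = \Pr[h(i)=h(j)]\,\E[g(i)]\,\E[g(j)] = 0$. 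For the first absolute moment, the triangle inequality ($|g(i)g(j)|=1$) and the collision bound $\Pr[h(i)=h(j)]\le 1/s$ for $i\ne j$ give
\[
\E[|E|] \;\le\; \sum_{i\ne j}\Pr[h(i)=h(j)]\,|v_i|\,|w_j| \;\le\; \tfrac1s\sum_{i\ne j}|v_i|\,|w_j| \;\le\; \tfrac1s\,\|v\|_1\|w\|_1 .
\]

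For the second moment I would square $E$ and take expectations, using independence of $h$ and $g$ to write $\E[E^2] = \sum_{i\ne j}\sum_{k\ne l}\E[\mathbf 1_{h(i)=h(j)}\mathbf 1_{h(k)=h(l)}]\cdot\E[g(i)g(j)g(k)g(l)]\cdot v_iw_jv_kw_l$. This is where $4$-wise independence of $g$ enters: for $i\ne j$ and $k\ne l$, the sign expectation $\E[g(i)g(j)g(k)g(l)]$ equals $1$ when the unordered pairs coincide, $\{i,j\}=\{k,l\}$ --- i.e. $(k,l)=(i,j)$ or $(k,l)=(j,i)$ --- and $0$ otherwise (any other configuration leaves some index with odd multiplicity, which is killed already by $2$-wise independence). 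Hence the double sum collapses to $\sum_{i\ne j}\Pr[h(i)=h(j)]\,(v_i^2w_j^2 + v_iv_jw_iw_j)$; then bounding $\Pr[h(i)=h(j)]\le 1/s$, extending both sums to all $(i,j)$, and using $\sum_{i,j}v_i^2w_j^2 = \|v\|_2^2\|w\|_2^2$ together with $\sum_{i,j}v_iv_jw_iw_j = \langle v,w\rangle^2 \le \|v\|_2^2\|w\|_2^2$ (Cauchy--Schwarz) yields $\E[E^2]\le 2\|v\|_2^2\|w\|_2^2/s$.

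The main obstacle is the bookkeeping in the fourth-moment-of-$g$ step: one must verify carefully that among all quadruples $(i,j,k,l)$ with $i\ne j$ and $k\ne l$ only the two ``matched-pair'' configurations survive --- in particular ruling out partial coincidences such as $i=k$, $j\ne l$, where $\E[g(i)^2g(j)g(l)] = \E[g(j)g(l)] = 0$ --- and that in each surviving case the sign product is deterministically $1$, so that only the collision probability $\Pr[h(i)=h(j)]\le 1/s$ remains. Everything else is routine linearity of expectation, the triangle inequality, and the pairwise collision bound for $h$.
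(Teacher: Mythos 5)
Your proposal is correct and follows essentially the same route as the paper's own proof: expand the row inner product over colliding index pairs, use 2-wise independence of $g$ for unbiasedness and the $\|v\|_1\|w\|_1/s$ bound, and use 4-wise independence to reduce the squared error to the two matched-pair configurations, finishing with Cauchy--Schwarz. No gaps.
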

Theorem~\ref{thm:inner} follows by combining
Lemma~\ref{lem:innerest} and Theorem~\ref{thm:moments}.\supplementary{\begin{proof}
  For short, let $g = g_1$ and $h = h_1$. We start by observing the
  \begin{eqnarray*}
    \langle A^v_1, A^w_1 \rangle &=& \sum_{i=1}^d \sum_{j=1}^d g(i)
                                     v_ig(j) w_j 1_{h(i)=h(j)}.
  \end{eqnarray*}
  Using 2-wise independence of $g$, we get:
  \begin{eqnarray*}
    \E[\langle A^v_1, A^w_1  \rangle ]&=& \sum_{i=1}^d \sum_{j=1}^d \E[g(i)
                                         v_ig(j) w_j 1_{h(i)=h(j)}] \\
    &=& \sum_{i=1}^d \sum_{j=1}^d \E[g(i)g(j)]
        v_iw_j \E[1_{h(i)=h(j)}] \\
                                     &=& \sum_{i=1}^d v_i w_i \\
    &=& \langle v,w\rangle.
  \end{eqnarray*}
  Next, we see that
  \begin{align*}
    & \E[|\langle A^v_1, A^w_1 \rangle -
\langle v ,w \rangle |]\\
   &= \E\left[\left| \sum_{i=1}^d \sum_{j \neq i} g(i)
                                         v_ig(j) w_j
                            1_{h(i)=h(j)}\right|\right] \\
    &\leq \E\left[\sum_{i=1}^d \sum_{j \neq i} |g(i)|
                                         |v_i||g(j)||w_j||
           1_{h(i)=h(j)}|\right] \\
    &= \sum_{i=1}^d \sum_{j \neq i}
                                         |v_i||w_j|
        \E[1_{h(i)=h(j)}] \\
                        &\leq 
                          \sum_{i=1}^d |v_i|\sum_{j=1}^d
                               |w_j|/s \\
    &= \|v\|_1 \|w\|_1/s.
  \end{align*}
  And finally for 4-wise independent $g$, we have:
  \begin{align*}
    & \E[(\langle A^v_1, A^w_1 \rangle -
\langle v ,w \rangle)^2]\\
   &= \E \left[ \left(\sum_{i=1}^d \sum_{j \neq i} g(i)
                                         v_ig(j) w_j
                             1_{h(i)=h(j)}\right)^2\right] \\
    &= \sum_{i=1}^d \sum_{j \neq i}\sum_{a=1}^d \sum_{b \neq a} \E[g(i)
                                         v_ig(j) w_j
                             1_{h(i)=h(j)} g(a) v_a g(b) w_b
        1_{h(a)=h(b)}] \\
                         &= \sum_{i=1}^d \sum_{j \neq i}\sum_{a=1}^d \sum_{b \neq a} 
                                         v_iw_j
                             \E[1_{h(i)=h(j)} 1_{h(a)=h(b)} ] \E[g(i)g(j)g(a)g(b)] v_a w_b.
\end{align*}
Recall that $a \neq b$ and $i \neq j$. Thus if $a \notin \{i,j\}$
or $b \notin \{i,j\}$, then at least one $g(\cdot )$ is
independent of the remaining three by 4-wise independence of $g$. The
expectation $\E[g(i)g(j)g(a)g(b)]$ then splits into the product of the expectation of that
single term and the remaining three. Since $\E[g(\cdot)]=0$, the whole
term in the sum becomes $0$. Thus for any given $(i,j)$ with $i \neq j$, there
are two choices of $(a,b)$ that do not result in the term
disappearing, namely $(a,b) = (i,j)$ and $(a,b) = (j,i)$. In both
these cases, $g(i)g(j)g(a)g(b)=1$. When $(a,b)=(i,j)$ we have $v_i w_j
v_a w_b = v_i^2 w_j^2$ and when $(a,b)=(j,i)$ we have $ v_i w_j
v_a w_b = v_i w_i v_j w_j$. Therefore:
\begin{align*}
    & \E[(\langle A^v_1, A^w_1 \rangle -
\langle v ,w \rangle)^2]\\
&= \sum_{i=1}^d \sum_{j \neq i}
                                         (v_i^2w_j^2 + v_i w_i v_j
                             w_j)\E[1_{h(i)=h(j)}] \\
  &\leq  \sum_{i=1}^d \sum_{j=1}^d
                                         (v_i^2w_j^2 + v_i w_i v_j
         w_j)/s \\
  &=  \|v\|_2^2 \|w\|_2^2/s + \sum_{i=1}^d \sum_{j=1}^d v_i w_i v_j
      w_j/s \\
  &=  \|v\|_2^2 \|w\|_2^2/s + \langle v, w\rangle^2/s.
\end{align*}
By Cauchy-Schwartz, we have $\langle v, w \rangle \leq \|v\|_2
\|w\|_2$ and thus the whole thing is bounded by $2 \|v\|_2^2 \|w\|_2^2/s$.
\end{proof}} 
\section{Experiments}
\label{sec:exp}
In this section, we empirically support our new theoretical bounds by estimating the variance of CountSketch with $1$ row and $3$ rows on different data sets. We implemented CountSketch in C++ using the multiply-shift hash function~\cite{dietzfelbinger96} as the 2-wise independent hash functions $h$ and $g$. 
We seeded the hash functions using random numbers generated using the built-in Mersenne twister 64-bit pseudorandom generator. 
Experiments were run both for frequency estimation (Section~\ref{sec:freq}) and for inner product estimation (Section~\ref{sec:inner}).

\paragraph{Frequency estimation.}
\label{sec:freq}
We ran experiments on two real-world data sets and two synthetic data sets. The real-world data sets come in the form of a stream of items, with the same item occurring multiple times. Instead of running numerous $(i,1)$ updates ($v_i \gets v_i + 1$), we have simply computed the number of occurrences $c_i$ of each item. We then normalize the occurrences $c_i \gets c_i /\sum_j c_j$ to obtain unit $\ell_1$-norm and then run a single update $v_i \gets v_i + c_i$ for each item $i$ at the end. This produces the exact same CountSketch as when processing the updates one by one (with normalization). The data sets are described in the following:

\begin{itemize}
\item \textbf{Kosarak:} An anonymized click-stream dataset of a Hungarian online news portal.~\footnote{Provided by Ferenc Bodon to the FIMI data set located at \url{http://fimi.uantwerpen.be/data/}.} It consists of transactions, each of which has several items. We created a vector with one entry for each item, storing the total number of occurrences of that item. The vector has $41270$ entries, and when normalized to have $\ell_1$-norm $1$, its $\ell_2$-norm is $0.112$ and the largest entry is $0.075$.

\item \textbf{Sentiment140:} A collection of 1.6M tweets from Twitter~\cite{go2009twitter}. We extracted all words that occur at least twice, and created a vector with one entry per word, containing the total number of occurrences of that word in the tweets. The vector has $147071$ entries, and when normalized to have $\ell_1$-norm $1$, its $\ell_2$-norm is $0.0773$ and the largest entry is $0.0382$.

  \item \textbf{Zipfian:} The Zipfian distribution with skew $\alpha$ and $n$ items is a probability distribution where the $k$th item has probability $k^{-\alpha}/\sum_{j=1}^n j^{\alpha}$. 
Such distributions have been shown to fit a large variety of real-world data.
We created two data sets with $n = 1000$ items using skews $\alpha=0.8$ and $\alpha=1.2$, considering the vector of probabilities. For $\alpha=0.8$, the $\ell_2$-norm is $0.097$ and the largest entry is $0.065$. For $\alpha=1.2$, the $\ell_2$-norm is $0.2713$ and the largest entry is $0.231$. \short{We include results for $\alpha=0.8$ in the supplementary material.}
\end{itemize}

The results of the experiments can be seen in Figures~\ref{fig:varianceKosarak}-\ref{fig:varianceZipfian12}. For each experiment, we plot the variance as a function of the number of columns $s$. We run experiments with $s=2^2,2^3,\dots,2^{10}$ on each data set. For each choice of $s$, we estimate the variance by constructing 1000 CountSketches on the input with new randomness for each. For each CountSketch we pick 100 random items and compute the estimation error for each. We sum the squares of all these estimation errors and divide by $100 \times 1000$ (for small data sets with less than $5000$ items, we instead build $10^6$ CountSketches and make a single estimation on each).

\begin{figure*}[h!t]
  \centering
  \includegraphics[width=0.93\textwidth]{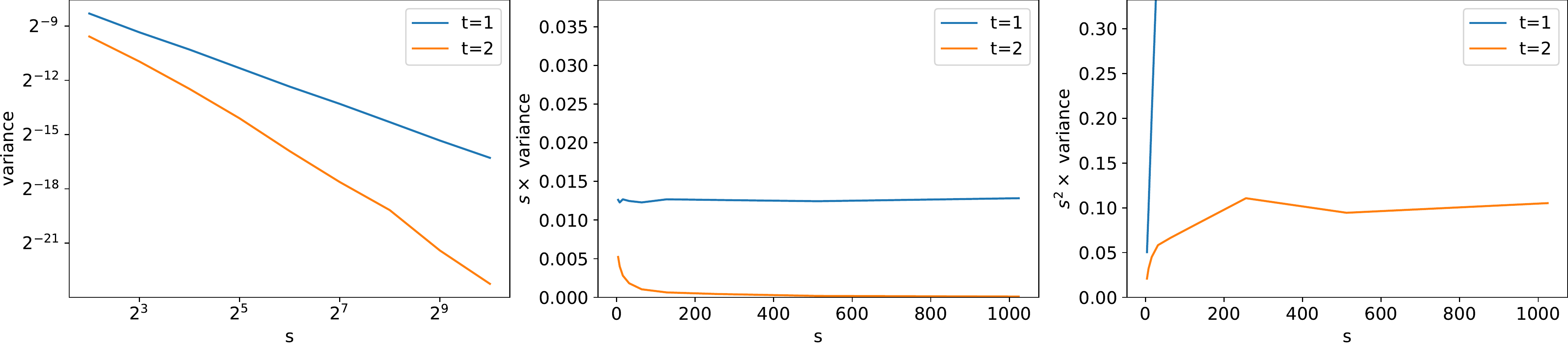}
  \caption{Variance experiments on the Kosarak data set.}
  \label{fig:varianceKosarak}
\end{figure*}

\begin{figure*}[h!t]
  \centering
  \includegraphics[width=0.93\textwidth]{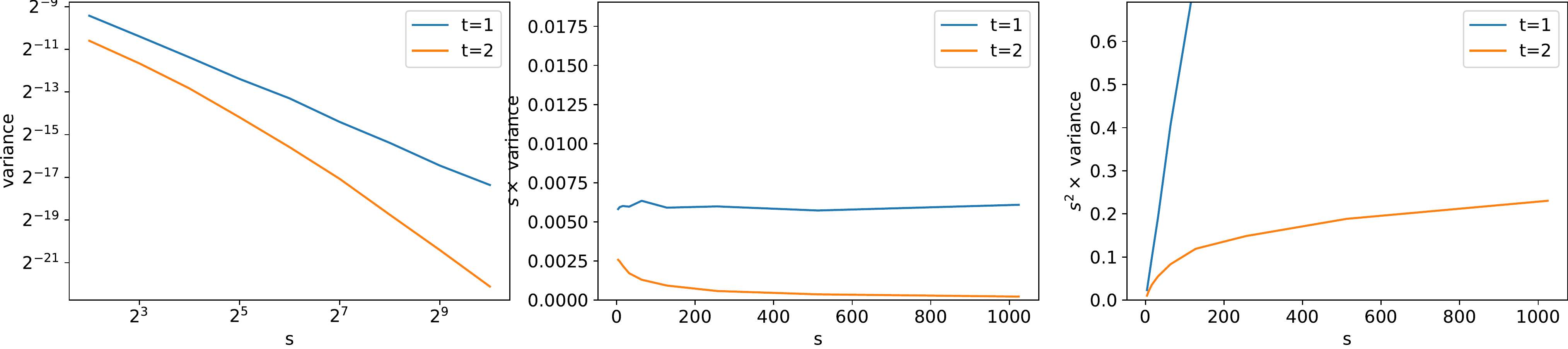}
  \caption{Variance experiments on the Sentiment140 data set.}
  \label{fig:varianceSentiment140}
\end{figure*}

\supplementary{
\begin{figure*}[h!t]
  \centering
  \includegraphics[width=0.93\textwidth]{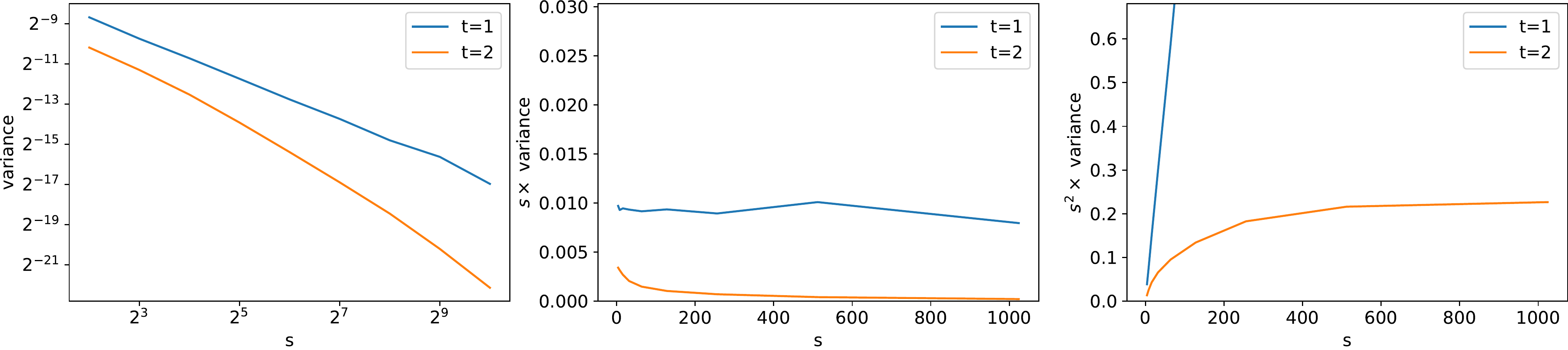}
  \caption{Variance experiments on Zipfian distribution with skew $\alpha=0.8$.}
  \label{fig:varianceZipfian08}
\end{figure*}
}

\begin{figure*}[h!t]
  \centering
  \includegraphics[width=0.93\textwidth]{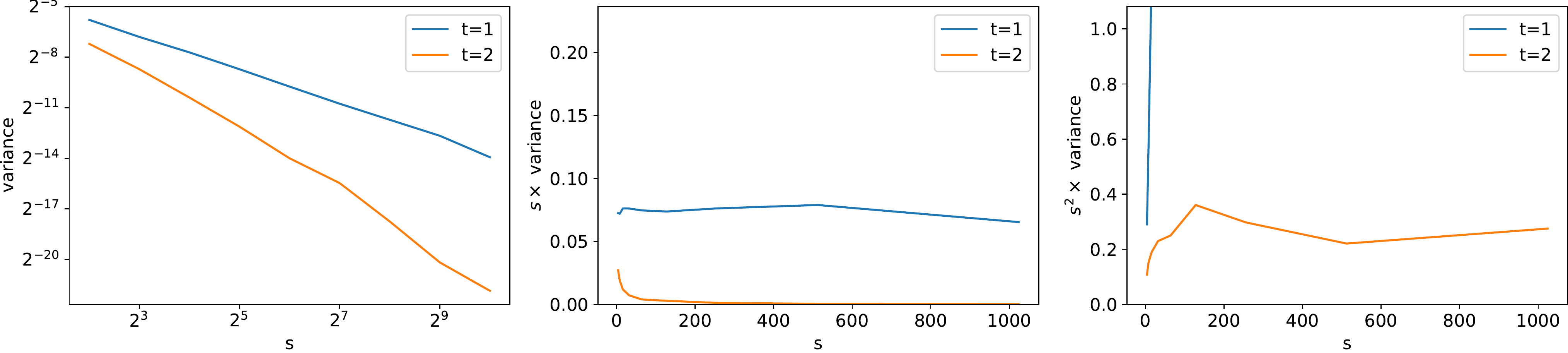}
  \caption{Variance experiments on Zipfian distribution with skew $\alpha=1.2$.}
  \label{fig:varianceZipfian12}
\end{figure*}

On all four data sets, we make three plots of the data. On the first, we show a log-log plot and observe that in all experiments, the variances look linear on the plot, supporting a polynomial dependency on $s$. Second, we scale the variances by $s$ and plot it on a linear scale. In all experiments, the scaled variance for $t=1$ looks constant, supporting a $1/s$ dependency on the number of columns $s$. Third, we scale the variance by $s^2$ and plot it on a linear scale. The scaled variance for $t=2$ looks almost constant in all experiments, supporting a $1/s^2$ dependency on the number of columns. We remark that our theoretical bound in Theorem~\ref{thm:variance} guarantees $\E[(\hat{v}_j-v_j)^2] \leq 3 \|v\|_1^2/s^2$. Since $\|v\|_1 = 1$ in all our data sets, so we expect a CountSketch with $t=2$ on the third plots to stay below $3$ on the y-axis, which it does in all experiments (it even stays below $0.4$).

\begin{table}[h!]
  \centering
  {\small
  \begin{tabular}{|l|c|c|c|}
    \hline
    Data Set & Variance $t=1$ & Variance $t=2$ & Ratio \\
    \hline
    Kosarak & $1.25 \times 10^{-5}$ & $1.42 \times 10^{-7}$ & $88.0$ \\
    Sentiment140 & $5.94 \times 10^{-6}$ & $2.13 \times 10^{-7}$ & $27.9$ \\
    Zipfian $\alpha=0.8$ & $9.56 \times 10^{-6}$ & $2.09 \times 10^{-7}$ & $45.7$ \\
    Zipfian $\alpha=1.2$ & $6.94 \times 10^{-5}$ & $3.99 \times 10^{-7}$ & $173.9$ \\
    \hline
  \end{tabular}
  }
  \caption{Variances for different data sets with $2$ and $3$ rows ($t=1,2$) of CountSketch. In all experiments, we consider a CountSketch with $s=1024$ columns. The ratio in the last column of the table gives the relative difference between using $1$ and $3$ rows.}
    \label{tbl:vars}
  \end{table}

  Table~\ref{tbl:vars} shows the variance on the different data sets using CountSketch with $s=1024$ rows. In all cases, that increasing CountSketch parameter $t$ from $1$ to $2$ clearly provides major reductions in variance, ranging from a factor about $28$ to $174$. 

  We also perform experiments measuring the $4$th moment of the estimation errors. \short{These can be found in the supplementary material.}
\supplementary{
  The results of these experiments can be seen in Figures~\ref{fig:4thKosarak}-\ref{fig:4thZipfian12}.
 
\begin{figure*}[h!t]
  \centering
  \includegraphics[width=0.93\textwidth]{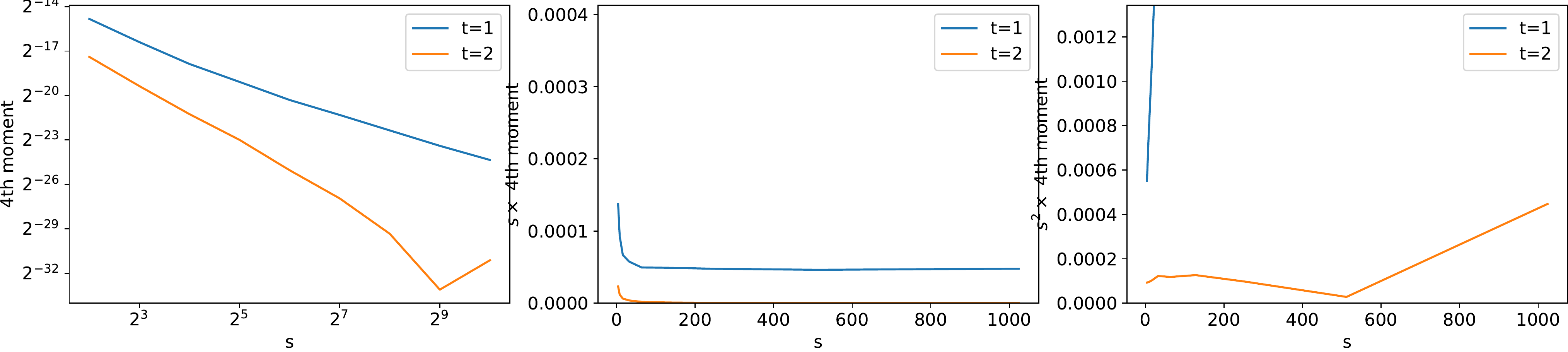}
  \caption{$4$th moment  experiments on the Kosarak data set.}
  \label{fig:4thKosarak}
\end{figure*}

\begin{figure*}[h!t]
  \centering
  \includegraphics[width=0.93\textwidth]{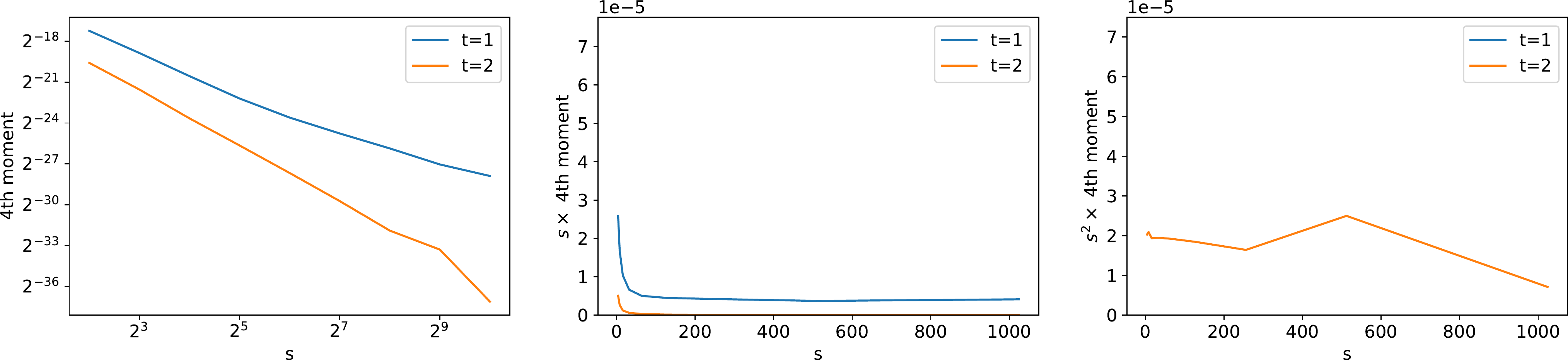}
  \caption{$4$th moment  experiments on the Sentiment140 data set.}
  \label{fig:4thSentiment140}
\end{figure*}

\begin{figure*}[h!t]
  \centering
  \includegraphics[width=0.93\textwidth]{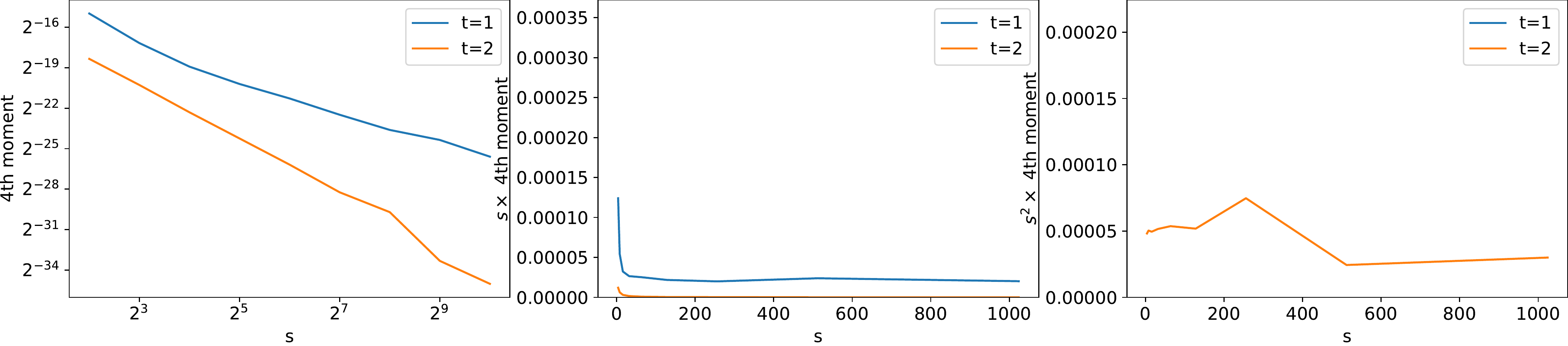}
  \caption{$4$th moment  experiments on Zipfian distribution with skew $\alpha=0.8$.}
  \label{fig:4thZipfian08}
\end{figure*}

\begin{figure*}[h!t]
  \centering
  \includegraphics[width=0.93\textwidth]{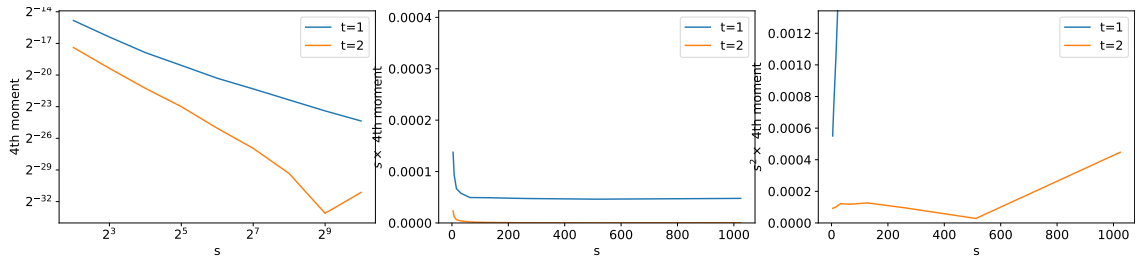}
  \caption{$4$th moment  experiments on Zipfian distribution with skew $\alpha=1.2$.}
  \label{fig:4thZipfian12}
\end{figure*}

Again, we have plotted the $4$th moment times $s$ and the $4$th moment times $s^2$. Similar to the variance experiments, it appears that CountSketch with $t=1$ has a $4$th moment error growing as $1/s$ and with $t=2$ it grows as $1/s^2$, supporting our new theoretical findings in Theorem~\ref{thm:4th}.
}

To summarize, we believe our empirical findings support our new theoretical bounds on the variance and $4$th moment. Moreover, our results strongly suggest that practitioners use $t\geq 2$ with CountSketch as it provides major reductions in variance at little increase in time and memory efficiency.

\paragraph{Inner product estimation.}
\label{sec:inner}
In the following, we perform experiments where we use CountSketch for inner product estimation. We perform experiments on two data sets, a synthetic and a real-world data set:
\begin{itemize}
  \item \textbf{Disjoint 64 non-zeros:} A synthetic data set with two vectors both having  $64$ non-zero entries each with value $1/64$. The two vectors have disjoint supports and thus inner product $0$. The $\ell_2$-norm of the vectors is $1/8=0.125$ and the largest entry is $1/64 \approx 0.0156$.
  \item \textbf{News20:} A collection of newsgroup documents on different topics~\footnote{ \url{http://qwone.com/~jason/20Newsgroups/}}. Each document is represented by a tf-idf vector constructed on the words occurring in the documents. We used the training part of the data set for our experiments. The data set has 11314 distinct vectors. For comparison to our theoretical bounds, we normalize the vector $v$ representing each document such that it has $\|v\|_1=1$. After normalization, the average $\ell_2$-norm of a document vector is $0.1235$ and the average largest entry is $0.0498$.
  \end{itemize}
  \begin{figure*}[ht!]
  \centering
  \includegraphics[width=0.93\textwidth]{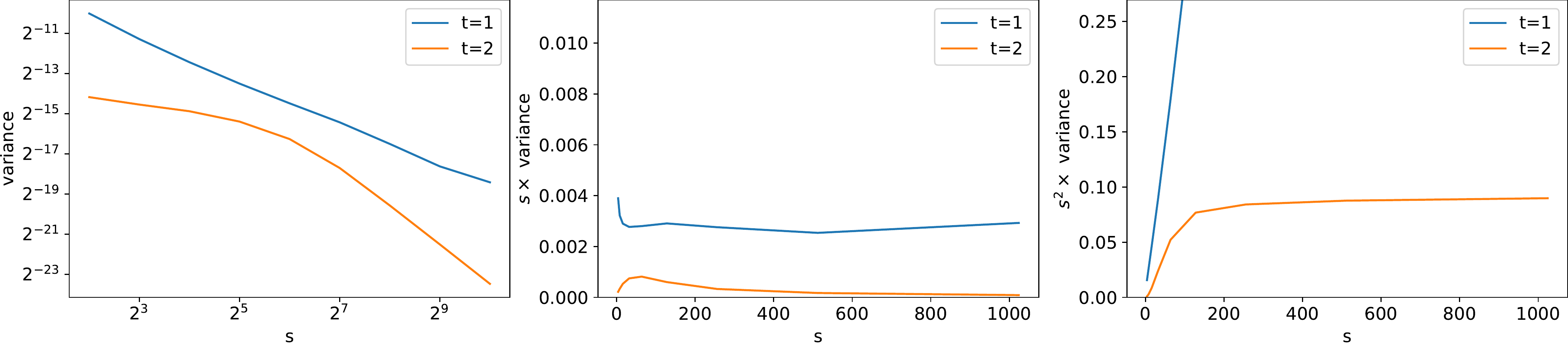}
  \caption{Variance experiments on the Disjoint 64 Non-Zeros data set.}
  \label{fig:var64Disj}
\end{figure*}
\begin{figure*}[h!t]
  \centering
  \includegraphics[width=0.93\textwidth]{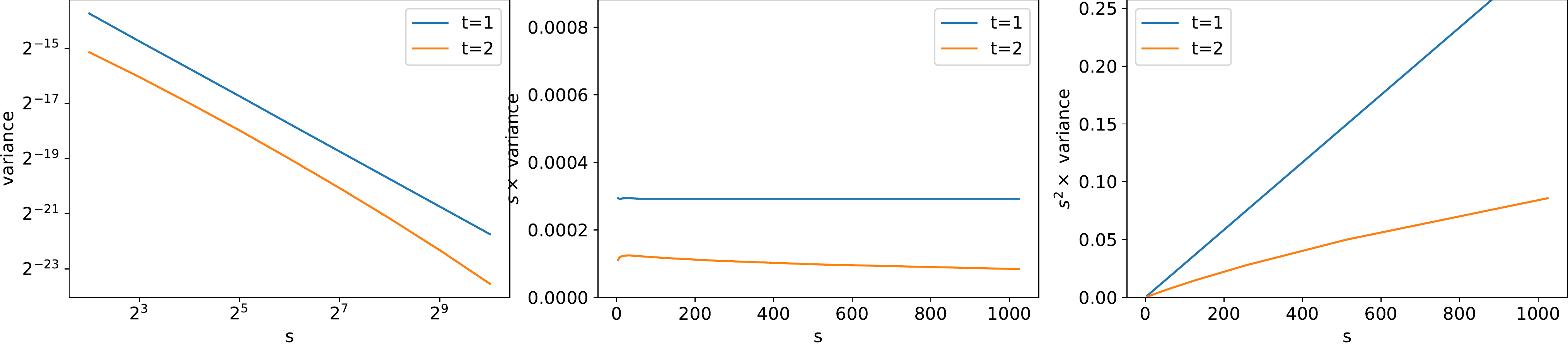}
  \caption{Variance experiments on the News20 data set.}\label{fig:varNews20}
\end{figure*}
\begin{figure*}[h!]
  \centering
  \includegraphics[width=0.93\textwidth]{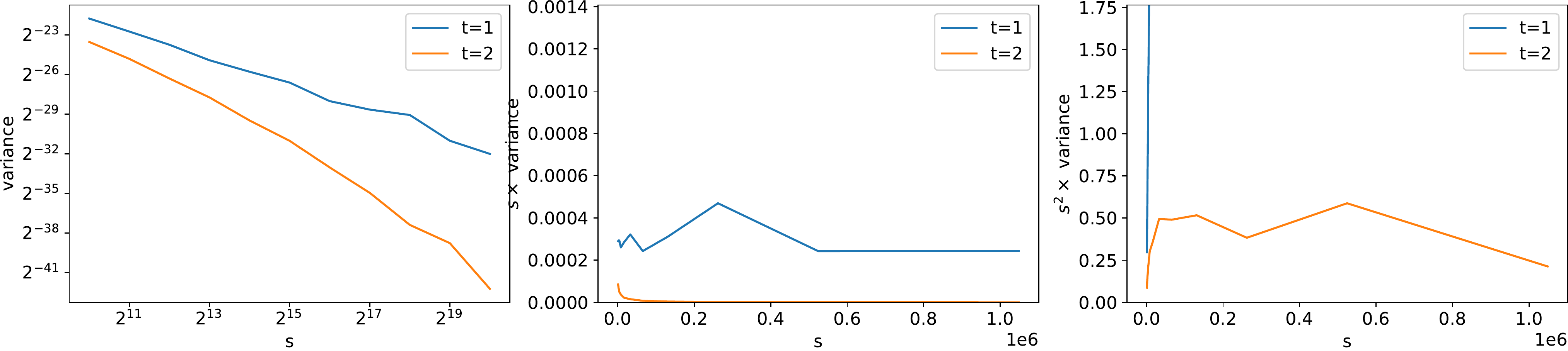}
  \caption{Variance experiments on the News20 data set and number of columns up to $s = 2^{20}$.}
  \label{fig:varNews20LargeS}
\end{figure*}
  For the Disjoint 64 Non-Zeros data set, for $10^6$ iterations, we constructed a new CountSketch on the two vectors using the same random hash functions. We then computed the squared error of the estimates and averaged over all $10^6$ iterations. For the News20 data set, we run $1000$ iterations where we pick new random hash functions in each iteration. In an iteration, we pick $100$ random pairs of distinct vectors, build a CountSketch on both vectors in a pair, and compute the squared estimation error. We finally average over all $100 \times 1000$ pairs.
Figure~\ref{fig:var64Disj} shows the results of experiments on the Disjoint 64 Non-Zeros data set. As before, these plots fit our theoretical guarantees in Theorem~\ref{thm:inner}.

Finally, we have run experiments on the News20 data set. The results are shown in Figure~\ref{fig:varNews20}. 
Unlike in previous experiments, it appears that CountSketch with $3$ rows ($t=2$) has a variance decreasing as $1/s$, not $1/s^2$. To explain this, recall that the guarantee from Theorem~\ref{thm:inner} is $\E[(X - \langle v, w\rangle)^2] \leq \min\{3 \|v\|_1^2\|w\|_q^2/s^2 ,2\|v\|_2^2 \|w\|_2^2\}$. In the News20 data set, the average $\|v\|_2$ is $0.1235$. When this is raised to the fourth power (it appears in both $\|v\|_2^2$ and $\|w\|_2^2$) it becomes very small compared to $\|v\|_1^2\|w\|_1^2 = 1$, thus the $1/s^2$ dependency should only kick in for large values of $s$. To confirm this, we have run more experiments, this time with values of $s$ ranging from $2^{10}$ to $2^{20}$. The results are shown in Figure~\ref{fig:varNews20LargeS}.

With these larger values of $s$, we see the expected $1/s^2$ dependency in the variance for $t=2$. To conclude on this, one may need a larger value of $s$ to see the $1/s^2$ behaviour in variance when performing inner product estimation compared to frequency estimation. This is due to the dependency on the \emph{product} of two vectors of either $\|v\|_1^2 \|w\|_1^2$ or $\|v\|_2^2 \|w\|_2^2$ compared to just the single dependency on $\|v\|_1^2$ and $\|v\|_2^2$ for frequency estimation.

As with frequency estimation, we also experimentally examine the 4th moments. \short{These results are included in the supplementary material.}\supplementary{For the results of these experiments, see Figures~\ref{fig:4th64Disj} and~\ref{fig:4thNews20}.

\begin{figure*}[ht!]
  \centering
  \includegraphics[width=0.93\textwidth]{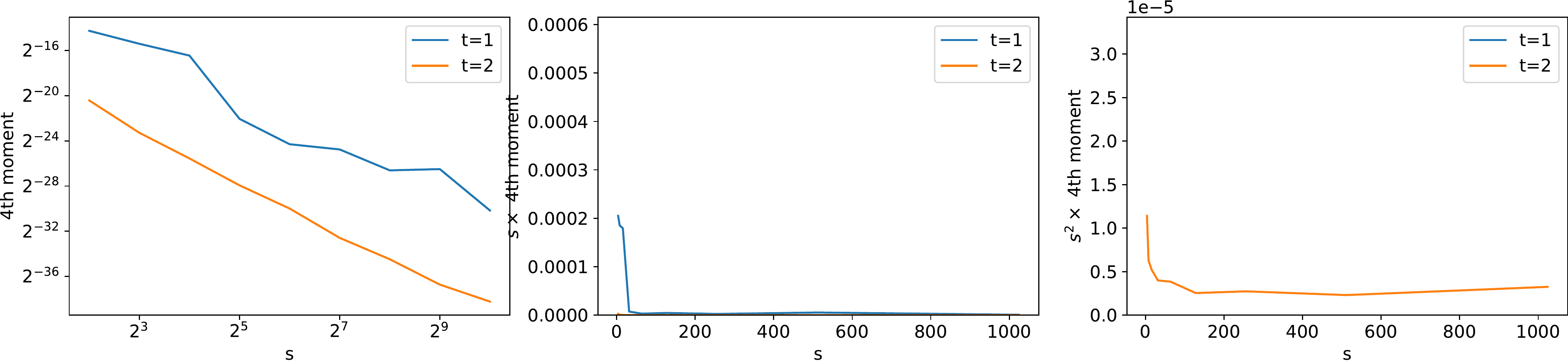}
  \caption{4th moment experiments on the Disjoint 64 Non-Zeros data set.}
  \label{fig:4th64Disj}
\end{figure*}
\begin{figure*}[h!t]
  \centering
  \includegraphics[width=0.93\textwidth]{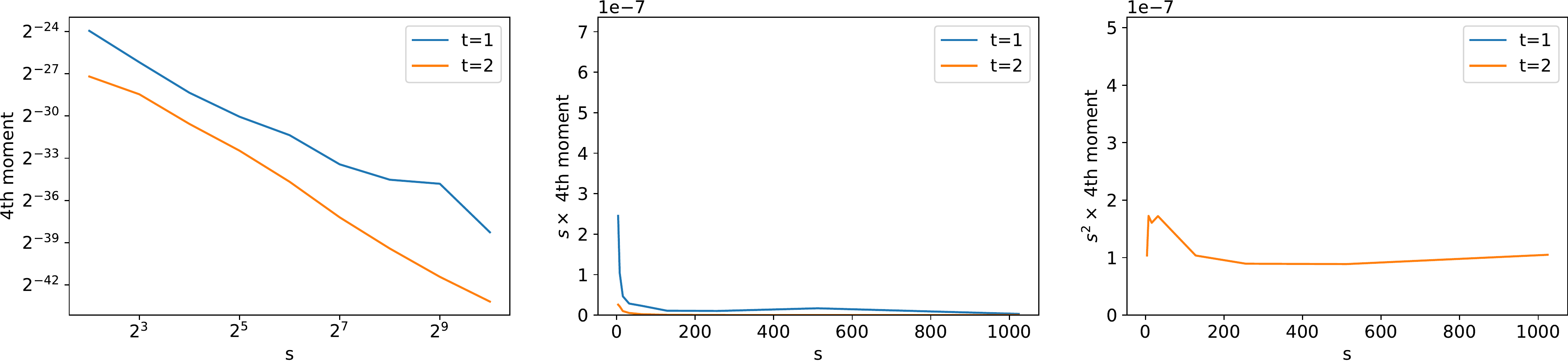}
  \caption{4th moment experiments on the News20 data set.}\label{fig:4thNews20}
\end{figure*}
}

\short{
\section{Conclusion}

We have seen that taking the median of 3 estimates can significantly improve the accuracy of estimates for Count\-Sketch. An interesting direction, that we leave open, is to take advantage of this in more applications that use CountSketch or feature hashing. 
A challenge is that a median operation not available in some contexts (like neural networks, or kernel approximation), and may need to be replaced by a continuously differentiable approximation.

\newpage
}

\raggedbottom

\bibliography{bibliography.bib}
\bibliographystyle{plainnat}

\end{document}